\renewcommand\Omega\Omegaup
\renewcommand\gamma\gammaup
\renewcommand\delta\deltaup
\newtheorem{lemma}{Lemma}
\newtheorem{theorem}{Theorem}
\theoremstyle{definition}
\newtheorem{definition}{Definition}
\theoremstyle{remark}
\newtheorem{example}{Example}
\newtheorem{remark}{Remark}
\renewcommand{\fg}{\mathfrak{g}}
\newcommand{\fp}{\mathfrak{p}}
\newcommand{\ft}{\mathfrak{t}}
\newcommand{\fh}{\mathfrak{h}}
\newcommand{\fG}{\mathfrak{G}}
\newcommand{\fder}{\mathfrak{der}}
\newcommand{\ZZ}{\mathbb{Z}}
\newcommand{\Q}{\mathcal{Q}}
\newcommand{\CEc}{\operatorname{CE}_\bullet}
\newcommand{\CE}{\operatorname{CE}^\bullet}
\newcommand{\Coder}{\operatorname{Coder}}
\newcommand{\Sym}{\operatorname{Sym}}
\title{Twisted Homotopy Algebras: Supersymmetric Twists, Spontaneous Symmetry Breaking, Anomalies and Localisation}
\author[a,b]{Leron Borsten\textsuperscript{\orcidlink{0000-0001-9008-7725}}}
\author[b,c]{Simon Jonsson\textsuperscript{\orcidlink{0009-0001-7155-8496}}}
\author[b]{Dimitri Kanakaris\textsuperscript{\orcidlink{0009-0001-7716-851X}}}
\author[b]{Hyungrok Kim~(\begin{CJK*}{UTF8}{bsmi}金炯錄\end{CJK*})\textsuperscript{\orcidlink{0000-0001-7909-4510}}}
\affil[a]{Blackett Laboratory, Imperial College London, London \textsc{sw7 2az}, United Kingdom}
\affil[b]{Department of Physics, Astronomy and Mathematics, University of Hertfordshire, Hatfield, Hertfordshire \textsc{al10 9ab}, United Kingdom}
\affil[c]{Aros Kapital, Vestagatan 6, 416 64 Göteborg, Sweden }
\newcommand\email[1]{\texttt{\href{mailto:#1}{#1}}}
\affil[ ]{\email{l.borsten@herts.ac.uk}, \email{simon.jonsson010@gmail.com}, \email{d.kanakaris-decavel@herts.ac.uk}, \email{h.kim2@herts.ac.uk}}
\begin{document}
\maketitle
\begin{abstract}
Twisting and classical background fields are two foundational techniques in supersymmetric quantum field theory, central to developments ranging from the Higgs mechanism to topological twisting and supersymmetric localisation. While traditionally treated as distinct procedures, they appear on an equal footing in the homotopy-algebraic approach to quantum field theory.
In this work, we formalise this connection by interpreting both twisting and the introduction of classical backgrounds as instances of twisting curved quantum \(L_\infty\)-superalgebras. Using the language of homotopy algebras and the Batalin–Vilkovisky formalism, we provide a unified algebraic framework that encompasses topological/holomorphic twists, spontaneous symmetry breaking, computation of anomalies, and supersymmetric localisation à la Festuccia--Seiberg. As a byproduct, we introduce a notion of twisting for quantum \(L_\infty\)-algebras
and a homotopy-algebraic reformulation of the one-particle-irreducible effective action.
\end{abstract}

\tableofcontents
\section{Introduction}
Two of the pillars of supersymmetric quantum field theory are the method of introducing classical backgrounds and the method of twisting.
The former underlies spontaneous symmetry breaking and the Higgs mechanism, where one introduces a classical background (nonzero vacuum expectation value) of a scalar field to manifest a true vacuum of the theory; when applied to the quantum theory, turning on a classical gauge field or metric background uncovers gauge and gravitational anomalies.
The latter isolates topologically or holomorphically protected sectors of supersymmetric gauge theories and sheds important light on string theory \cite{Witten:1988xj,Costello:2018zrm}, the (2,0) theory \cite{Beem:2014kka,pirsa_PIRSA:23070028}, M-theory \cite{Costello:2016mgj,Raghavendran:2021qbh,Cushing:2022uyd,Hahner:2024xrh,Hahner:2023kts,pirsa_PIRSA:23070028}, and beyond, in addition to being of fundamental importance to low-dimensional topology in the guise of Donaldson--Witten \cite{10.4310/jdg/1214437665,Witten:1988ze} and Seiberg--Witten \cite{Witten:1994cg} theories.

It is a curious fact that the two ingredients co-occur in the approach of Festuccia and Seiberg \cite{Festuccia:2011ws} (reviewed in \cite{Dumitrescu:2016ltq}) to supersymmetric localisation, in which one takes a supersymmetric theory, turns on some classical supergravity backgrounds, and
then twists using an unbroken supersymmetry generator.
This paper shows that this is no mere coincidence.
It has been recently recognised \cite{Elliott:2020ecf,Elliott:2020uwn} that twisting and classical backgrounds are intimately connected --- that, in a sense, twisting amounts to putting on a certain sort of global background field.
We sharpen this connection using the \(L_\infty\)-algebra formalism \cite{Jurco:2018sby,Arvanitakis:2019ald,Macrelli:2019afx,Jurco:2019yfd,Saemann:2020oyz,Jurco:2020yyu,Borsten:2021hua,Borsten:2021gyl,Borsten:2022ouu} for quantum field theory to cast both in terms of twisting quantum curved \(L_\infty\)-superalgebras. Using this  perspective, we uniformly formulate topological/holomorphic twisting, Higgsing, anomaly computations and supersymmetric localisation in the language of \(L_\infty\)-algebras and the Batalin--Vilkovisky formalism \cite{1983PhRvD..28.2567B,Batalin:1985qj,BATALIN1984106,Batalin:1981jr,Batalin:1977pb}.

Besides conceptual simplification, our uniform formulation clarifies various issues in the literature. Because we work consistently with homotopy algebras, it is clear that all our considerations apply straightforwardly to higher (e.g.~higher-form symmetries \cite{Gomes:2023ahz,Cordova:2022ruw,Brennan:2023mmt,Luo:2023ive,Bhardwaj:2023kri}) and/or nonstrict symmetries (e.g.~on-shell representations of supersymmetry \cite{Eager:2021wpi}). In particular, the existing literature on supersymmetric localisation \cite{Festuccia:2011ws,Dumitrescu:2016ltq} always works with off-shell representations of local supersymmetry including auxiliary fields; our perspective makes it clear that, at least in principle, this is not necessary.

As part of our discussion, we formulate the notion of twisting for quantum \(L_\infty\)-algebras, which is straightforward but nevertheless appears to be new to the literature.

\paragraph{Organisation of this paper.}
This paper is organised as follows. \Cref{sec:maths} reviews the twisting of (curved) \(L_\infty\)-algebras and generalises this construction to the quantum case, and \cref{sec:effective} reformulates the construction of one-particle-irreducible actions in the language of \(L_\infty\)-algebras. Then \cref{sec:classical_background} formulates perturbation theory atop a classical background --- including spontaneous symmetry breaking and the Higgs mechanism --- as a twist by a (quantum) Maurer--Cartan element corresponding to the background. \Cref{sec:anomaly} then applies this construction to one-particle-irreducible effective actions to detect anomalies.
\Cref{sec:twistingtwisting} reviews the physics notion of topological or holomorphic twists and their relation to the twists of \(L_\infty\)-algebras as `gauging' a global symmetry and turning on a constant `ghost' background. Finally, \cref{sec:localisation} discusses the supersymmetric localisation technique of Festuccia--Seiberg using supergravity backgrounds and its formulation in terms of \(L_\infty\)-algebra twists.

\paragraph{Notation.}
In this paper, we use the Koszul sign convention throughout. The notation \(V[i]\) indicates suspension, i.e.\ \((V[i])_j\coloneqq V_{i+j}\); where convenient, we also denote suspension by \(s\). The notation \(\odot\) denotes graded symmetrisation.

\section{Twisting classical and quantum \texorpdfstring{\(L_\infty\)}{L∞}-algebras}\label{sec:maths}
Twisting is a fundamental operation in the theory of classical and quantum \(L_\infty\)-algebras, which has manifold manifestations in field theory, as subsequent sections endeavour to prove.
The constructions are well known and classical for (classical) \(L_\infty\)-algebras \cite{dotsenko2019twisting,Dotsenko:2022bbv,Kraft:2022efy}, but appear to be new (while straightforward) for the quantum case.

\subsection{Classical \texorpdfstring{\(L_\infty\)}{L∞}-algebras and their twists}
We briefly review the relevant notions of \(L_\infty\)-superalgebras and their twists. For more detailed reviews, see \cite{Loday:2012aa,Jurco:2018sby,dotsenko2019twisting,Dotsenko:2022bbv,Kraft:2022efy}.

A curved \(L_\infty\)-superalgebra is the homotopy generalisation of the notion of a Lie superalgebra.\footnote{If one wants to work with theories containing fermions, working with superalgebras is necessary.} There are three  equivalent standard definitions of curved \(L_\infty\)-superalgebras, which are heuristically (we give  definitions below):
\begin{enumerate}
\item A $\ZZ\times\ZZ_2$-graded vector space $\mathfrak{g}$ with multilinear maps $\mu_i\colon\fg^{\times i}\to \fg$ that obey homotopy Jacobi relations. This picture  directly generalises the standard definition of a Lie superalgebra, where $\mathfrak{g}$ is a supervector space and the Lie bracket $[x_1, x_2]=\mu_2(x_1, x_2)$ is the only non-trivial bracket.
\item A cofree cocommutative $\ZZ\times\ZZ_2$-graded coalgebra with coderivation $D$ obeying $D^2=0$. The coderivation encodes the products $\mu_i$ and $D^2=0$ imposes the  homotopy Jacobi relations. This picture makes morphisms of algebras easy to state and most naturally captures the scattering amplitudes in quantum field theory.

\item A $\mathbb Z\times\mathbb Z_2$-graded manifold $M$ that is merely a $\mathbb Z\times\mathbb Z_2$-graded vector space endowed with a degree $(1,0)$ vector field $\Q$  satisfying $\Q^2=0$. This picture connects most directly to the Batalin--Vilkovisky formalism (as realised by symplectic $\mathbb Z\times\mathbb Z_2$-graded manifolds).
\end{enumerate}

Each picture offers its own advantages and we shall freely move between them as convenience dictates. In the following we briefly introduce the key definitions and connections amongst these formulations that we shall need throughout. For a detailed account see, for example, \cite{Jurco:2018sby}. 

First, we have the standard definition of an $L_\infty$-superalgebra, with explicitly stated   homotopy Jacobi   identities:

\begin{definition}\label{def:Linfty}
Over any commutative algebra \(\mathbb K\) over \(\mathbb F \supseteq\mathbb Q\),
a \emph{curved \(L_\infty\)-superalgebra} \((\mathfrak g,(\mu_i)_{i=0}^\infty)\) consists of a free \(\mathbb Z\times\mathbb Z_2\)-graded\footnote{In what follows, we refer to the $\ZZ$-grading as \emph{cohomological degree} and the $\ZZ_2$-grading as \emph{super degree}.
} \(\mathbb K\)-module \(\mathfrak g\) equipped with totally graded-antisymmetric \(\mathbb K\)-multilinear operations
\begin{equation}
    \mu_i\colon\bigwedge^i\mathfrak g\to\mathfrak g
\end{equation}
of degree \((2-i,0)\)
that satisfy the following homotopy Jacobi identity:
\begin{equation}
    \sum_{\mathclap{\substack{i+j=k\\\sigma\in\operatorname{Sym}(k)}}}\frac{(-1)^{ij}}{i!j!}\chi(\sigma)\mu_{j+1}\mleft(\mu_i(x_{\sigma(1)},\dotsc,x_{\sigma(i)}),x_{\sigma(i+1)},\dotsc,x_{\sigma(k)}\mright)=0,
\end{equation}
where the sum ranges over permutations of \(\{1,\dotsc,k\}\)
and where \(\chi(\sigma)\) is the graded-antisymmetric Koszul sign defined so that
\begin{equation}
    x_{\sigma(1)}\wedge\dotsb\wedge x_{\sigma(n)}\eqqcolon\chi(\sigma)x_1\wedge\dotsb\wedge x_n
\end{equation}
where
\begin{equation}
    x\wedge y=-(-1)^{pq+rs}y\wedge x
\end{equation}
for homogeneous elements \(x\) and \(y\) of degrees \((p,r)\) and \((q,s)\), respectively (\(p,q\in\mathbb Z\), \(r,s\in\{0,1\}\)).
\end{definition}

\begin{definition}\label{def:Linfty} A \emph{flat} \(L_\infty\)-superalgebra is a curved \(L_\infty\)-superalgebra with \(\mu_0=0\).
\end{definition}
Flat $L_\infty$-superalgebras are especially nice in the sense that the homotopy Jacobi relations implies $\mu_1{}^2=0$, so that they have  $\mu_1$-cohomology. (A curved $L_\infty$-superalgebra is, in the literature, sometimes also referred to as a \emph{weak} $L_\infty$-superalgebra since non-trivial $\mu_0$ implies $\mu_1$ is \emph{not} a differential.)

\begin{example} A curved \(L_\infty\)-superalgebra with \(\mu_i=0\) save for \(\mu_1\) and \(\mu_2\) is precisely equivalent to a differential graded Lie superalgebra, with \(\mu_2\) the Lie bracket and \(\mu_1\) the differential. We call such algebras \emph{strict} $L_\infty$-superalgebras.
\end{example}
\begin{example} A curved \(L_\infty\)-superalgebra concentrated in cohomological degree \(0\) is precisely equivalent to a Lie superalgebra.\end{example}
\begin{example} A curved \(L_\infty\)-superalgebra concentrated in degree \((0,0)\) is precisely equivalent to an (ordinary) Lie algebra.\end{example}

By operadic Koszul duality \cite{2007arXiv0709.1228G},
\cref{def:Linfty} is equivalent (see for example \cite{Loday:2012aa,Jurco:2018sby, Kraft:2022efy}) to a nilquadratic  degree $1$ coderivation $D$ on the counital cofree cocommutative coassociative coalgebra \(\bigodot\mathfrak g[1]\), where $\bigodot V$ is the graded-symmetric tensor coalgebra, and for a graded vector space $V$,
\begin{equation}
  V[k]\coloneqq [k]\otimes V,
\end{equation}
where $[k]$ denotes the one-dimensional vector space concentrated in degree $-k$, that is, $V[k]^{i}\cong  V^{i+k}$.
The coproduct on $\bigodot\fg[1]$ is given by 
\begin{equation}
\upDelta\left(x_1\odot \cdots \odot x_n\right)=\sum_{\mathclap{\substack{k=0\\\sigma\in\operatorname{Sh}(k, n-k)}}}^n  \chi(\sigma_{k,n})\left(x_{\sigma(1)}\odot  \cdots  \odot x_{\sigma(k)}\right) \otimes\left(x_{\sigma(k+1)}\odot  \cdots \odot  x_{\sigma(n)}\right), 
\end{equation}
where the sum is over $(k, n-k)$-shuffles $\sigma\in\operatorname{Sh}(k, n-k) \subset S_n$.  The products $\mu_i$ are packaged into the coderivation in the obvious manner (for details see \cite{Jurco:2018sby}),
\begin{equation}
\mu_i\coloneqq \pm s^{-1} \circ D_i \circ s^{\odot i},
\end{equation}
where  $D_i$ is the restriction of the coderivation to $\bigodot^i\mathfrak g[1]\to \mathfrak g[1]$ and $s^{\odot i}\colon \mathfrak \bigwedge^{i} \fg\to\bigodot^i \mathfrak g[1]$ is the isomorphism, 
\begin{equation}
s^{\odot i}(x_1 \wedge \ldots \wedge x_i )= (-1)^{\sum_{j=1}^{i-1}(i-j)\left|x_j\right|} s x_1 \odot \cdots \odot s x_i,
\end{equation}
 where $s\colon V\to V[1]$ denotes the suspension map. The nilquadraticity condition \(D^2=0\) then encodes the homotopy Jacobi relations amongst the $\mu_i$. This is equivalent to  the  \emph{Chevalley--Eilenberg coalgebra} on $\mathfrak g$ and so also denoted \(\CEc(\mathfrak g)\coloneqq(\bigodot\mathfrak g[1],D)\).

The dual  Chevalley--Eilenberg \emph{algebra} \(\CE(\mathfrak g)\coloneqq( \bigodot\mathfrak g[1]^*,D^*)\) is equivalent to a differential graded supermanifold,\footnote{Here we gloss over the difference between smooth functions versus polynomial functions if \(\mathfrak g[1]\) has nontrivial degree-zero elements.} that is, $C^\infty(\fg[1])$ together with a nilquadratic homological degree $1$ vector field $\Q$. The nilquadraticity condition $\Q^2=0$ is precisely equivalent to the classical master equation in the Batalin--Vilkovisky formalism: a curved \(L_\infty\)-superalgebra is nothing but a differential graded supermanifold whose underlying graded supermanifold is a graded supervector space. The other key ingredient of the Batalin--Vilkovisky formalism is a $\Q$-invariant  degree $-1$ symplectic form $\omega$ satisfying $i_\Q \omega= \mathrm d S$, where $S$ is identified with the classical Batalin--Vilkovisky action.

\begin{definition}[Cyclic structure]\label{def:cyc} A \emph{cyclic structure} \(\langle-,-\rangle_{\mathfrak g^*}\) on a curved \(L_\infty\)-super\-algebra \(\mathfrak g\) is a
degree $(3,0)$ graded symmetric bilinear pairing
\begin{equation}
\langle-,-\rangle_{\mathfrak g^*}\colon \fg^* \times \fg^* \rightarrow \mathbb{F}
\end{equation}
such that, picking a basis \(\{t_a\}\) of \(\mathfrak g\), if the structure constants of \(\mu_i\) are \(f^{a_0}{}_{a_1\dotso a_i}\) and the structure constants of \(\langle-,-\rangle_{\mathfrak g^*}\) are \(c^{ab}\), then
\begin{equation}
    f^{a_0\dotso a_i}\coloneqq f^{a_0}{}_{b_1\dotso b_i}c^{a_1b_1}\dotsm c^{a_ib_i}
\end{equation}
is totally graded-antisymmetric.
\end{definition}
In the above, we have permitted degenerate pairings to allow for examples such as the inner-derivation algebra (\cref{def:inner_derivation_algebra}). If \(\langle-,-\rangle_{\mathfrak g^*}\) is nondegenerate, it may be inverted to produce a degree \(-3\) pairing on \(\mathfrak g\), in terms of which one may phrase the definition as follows:
\begin{definition}[Nondegenerate cyclic structure] A \emph{nondegenerate cyclic structure} \(\langle-,-\rangle_{\mathfrak g}\) on a curved \(L_\infty\)-super\-algebra \(\mathfrak g\) is a
degree $(-3,0)$ graded symmetric non-degenerate bilinear pairing
\begin{equation}
\langle-,-\rangle_{\mathfrak g}\colon \fg \times \fg \rightarrow \mathbb{F}
\end{equation}
satisfying
\begin{equation}\label{eq:cyclicity}
\left\langle x_0, \mu_i\mleft(x_1, \cdots, x_{i}\mright)\right\rangle_{\mathfrak g}=(-1)^{i+i\left(\left|x_1\right|+\left|x_{i}\right|\right)+\left|x_{i}\right|\sum_{j=0}^{i-1}\left|x_j\right|}  \left\langle x_{i}, \mu_i\mleft(x_0, \dotsc, x_1\mright)\right\rangle_{\mathfrak g}
\end{equation}
 for homogeneous $x_0, \ldots, x_{i} \in \fg$.
\end{definition}
Note that a $\Q$-invariant (i.e.\ $\mathcal{L}_\Q \omega=0$) symplectic form \(\omega\) is equivalent to a nondegenerate structure on $\fg$.

\begin{definition} Let $(\fg,(\mu_i^\fg)_{i=0}^\infty)$ and $(\fh,(\mu_i^\fh)_{i=0}^\infty)$ be two curved $L_\infty$-superalgebras. An \emph{$L_\infty$-morphism of curved $L_\infty$-superalgebras} $\phi\colon\fg\rightsquigarrow \fh$ is a morphism of counital cocommutative differential graded coalgebras
\begin{equation}
  \phi\colon\CEc(\fg)\to \CEc(\fh), \qquad \phi\circ D_\fg = D_\fh \circ \phi.
\end{equation}
  
\end{definition}

\begin{definition}
A \emph{Maurer--Cartan element} \(Q\in\mathfrak g\) in a curved \(L_\infty\)-superalgebra with graded supervector space \(\mathfrak g=\bigoplus_{(i,j)\in\mathbb Z\times\mathbb Z_2}\mathfrak g^{i,j}\) is an element of degree \((1,0)\) such that the following sum exists and is zero:
\begin{equation}\label{eq:Maurer-Cartan}
    0 = \sum_{i=0}^\infty\frac1{i!}\mu_i(Q,\dotsc,Q).
\end{equation}
Equivalently,   on the counital cofree coalgebra \(\bigodot\mathfrak g[1]\) a Maurer--Cartan element $sQ\in \fg[1]^{(0,0)}\cong\fg^{(1,0)}$ satisfies 
\begin{equation}
 0 = D (\exp(sQ))|_{\mathfrak g[1]}=\sum_{i=0}^\infty\frac1{i!}D_i(sQ\odot sQ \dotsc \odot sQ).
\end{equation}
This definition immediately implies that $Q$ is a (curved) Maurer--Cartan element if and only if $D(\exp(sQ))=0$ \cite{Dolgushev:2005sp, Kraft:2022efy}. 
\end{definition}

If $\fg$ is equipped with a nondegenerate cyclic structure, then the metric \(\langle-,-\rangle_{\mathfrak g^*}\) on \(\mathfrak g^*\) may be inverted to yield a metric \(\langle-,-\rangle_{\mathfrak g}\) on \(\mathfrak g\) and  the Maurer--Cartan equation is variational in the sense that it extremises the Maurer--Cartan action
\begin{equation}\label{eq:hMC_definition}
S_{\mathrm{MC}}[x]\coloneqq\sum_{i} \frac{1}{(i+1)!}\left\langle x, \mu_i(x, \ldots, x)\right\rangle_{\mathfrak g},
\end{equation}
where $x$ has degree $(1,0)$. 
 If \(\langle-,-\rangle_{\mathfrak g^*}\) is degenerate, one may formally
define \(\langle-,-\rangle_{\mathfrak g}\) to be a (noncanonical) pseudoinverse of \(\langle-,-\rangle_{\mathfrak g^*}\), and  the expression \eqref{eq:hMC_definition} also exists,
although the variational relation to the Maurer--Cartan equation may not hold.

\begin{definition}[\cite{Kraft:2022efy}]
  \label{def:classicaltwist}
Given a curved \(L_\infty\)-superalgebra \(\mathfrak g\) and an element \(Q\in\mathfrak g^{1,0}\) of degree \((1,0)\), the \emph{twist} of \(\mathfrak g\) by \(Q\) is the curved \(L_\infty\)-superalgebra \((\mathfrak g_Q,(\mu_i^Q)_{i=0}^\infty)\) where \(\mathfrak g_Q  = \mathfrak g\) as graded supervector spaces, and 
\begin{equation}
    \mu^Q_i(x_1,\dotsc,x_i) \coloneqq \sum_{k=0}^\infty\frac1{k!} \mu_{i+k}(Q,Q,\dotsc,Q,x_1,\dotsc,x_i)
\end{equation}
if the above sum always exists.
Equivalently, on the counital cofree coalgebra \(\bigodot\mathfrak g[1]\) the twist is given by 
\begin{equation}
    D_Q = \exp(-\widehat{sQ})D\exp(\widehat{sQ}),
\end{equation}
where for $x, y\in \bigodot\mathfrak g[1]$, we define the operator $\hat x\colon\bigodot\mathfrak g[1]\to \bigodot\mathfrak g[1]$ by  $\hat x (y) = x \odot y$. 

\end{definition}

\begin{lemma}[ {\cite[Lem. 5.14 and Prop. 5.28]{Kraft:2022efy}} ]\label{lem:Qfacts} Let $\phi\colon(\CEc(\fg),D) \rightarrow\left( \CEc(\fg'),D'\right)$ be an $L_{\infty}$-morphism of (curved) $L_{\infty}$-superalgebras and denote by  $\phi^1$ its restriction to $\bigodot\mathfrak g[1] \to \mathfrak g'[1]$. Let \(Q \in \fg^{1,0}\). Then:

  \begin{enumerate}
    \item The twist $\fg_Q$ of $\fg$ by $Q$ is a flat $L_\infty$-superalgebra if and only if $Q$ is a Maurer--Cartan element of $\fg$.
\item $\phi(\exp(sQ))=\exp(sQ_{\phi})$, where $Q_{\phi}\coloneqq s^{-1}\phi^1(\sum_{i=1}^{\infty} \frac{1}{i!} (sQ)^{\odot i})$. 
\item  If $Q$  is a Maurer--Cartan element of $\fg$, then $Q_{\phi}$ is a Maurer--Cartan element of $\fg'$.
  \item The map
    \begin{equation}
      \phi_Q\coloneq \exp(-\widehat{sQ_\phi})\phi\exp(\widehat{sQ})\colon\CEc(\fg_Q)\to \CEc(\fg'_{Q_\phi})
    \end{equation}
    defines an $L_\infty$-morphism of curved $L_\infty$-superalgebras. 
\item If $\fg$ and $\fg'$ are flat ($\mu_0=0$) $L_\infty$-superalgebras, $Q\in\fg^{1,0}$ is a Maurer--Cartan element, and $\phi$ is a quasi-isomorphism, then $\phi_Q\colon\fg_Q\rightsquigarrow \fg'_{Q_{\phi}}$ is a quasi-isomorphism.
\end{enumerate}
\end{lemma}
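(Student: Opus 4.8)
The plan is to reduce the statement to an elementary fact about cochain complexes by dualising the conjugation formulas that define the twist. First I would record that the claim is even well-posed: by the preceding parts of \cref{lem:Qfacts}, $Q_\phi$ is a Maurer--Cartan element, both $\fg_Q$ and $\fg'_{Q_\phi}$ are flat (so that $\mu_1^{\fg_Q}$ and $\mu_1^{\fg'_{Q_\phi}}$ are genuine differentials and the relevant $\mu_1$-cohomologies exist), and $\phi_Q$ is an honest $L_\infty$-morphism. Recall that $\phi_Q$ is a quasi-isomorphism precisely when its linear part induces an isomorphism on $\mu_1$-cohomology, and that --- under the standing finiteness/completeness hypotheses that make all the twisting sums converge --- operadic Koszul duality \cite{2007arXiv0709.1228G,Loday:2012aa} identifies this with the condition that the dual Chevalley--Eilenberg map $\phi_Q^*\colon\CE(\fg'_{Q_\phi})\to\CE(\fg_Q)$ be a quasi-isomorphism of cochain complexes. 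Thus it suffices to transport the quasi-isomorphism $\phi^*$ to $\phi_Q^*$.

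The heart of the argument is a translation identity obtained by dualising \cref{def:classicaltwist} and the fourth part of \cref{lem:Qfacts}. Since $sQ$ lies in degree $(0,0)$ of $\fg[1]$, the creation operator $\widehat{sQ}$ on $\bigodot\fg[1]$ is dual to the degree-preserving directional derivative $\partial_{sQ}$ on $C^\infty(\fg[1])=\CE(\fg)$, so that $\exp(\widehat{sQ})$ is dual to the translation automorphism $T_{sQ}\coloneqq\exp(\partial_{sQ})\colon f\mapsto f(\,\cdot+sQ)$, an isomorphism of the underlying graded algebra $\CE(\fg)$. Dualising $D_Q=\exp(-\widehat{sQ})\,D\,\exp(\widehat{sQ})$ gives $D_Q^*=T_{sQ}\,D^*\,T_{sQ}^{-1}$, exhibiting $T_{sQ}$ as an isomorphism of cochain complexes $\CE(\fg_Q)\xrightarrow{\ \cong\ }\CE(\fg)$ (and likewise $T_{sQ_\phi}$ for $\fg'$); dualising $\phi_Q=\exp(-\widehat{sQ_\phi})\,\phi\,\exp(\widehat{sQ})$ then yields the clean relation $\phi_Q^*=T_{sQ}\circ\phi^*\circ T_{sQ_\phi}^{-1}$.

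With this identity in hand the conclusion is immediate and elementary: over the field $\mathbb F$, every quasi-isomorphism of cochain complexes is a homotopy equivalence, and homotopy equivalences are preserved under pre- and post-composition with the cochain isomorphisms $T_{sQ_\phi}^{-1}$ and $T_{sQ}$. Hence $\phi_Q^*$ is a quasi-isomorphism whenever $\phi^*$ is, with an explicit homotopy inverse conjugated from that of $\phi^*$; notably this middle step requires no convergence input beyond the existence of the twists. Running Koszul duality in reverse then upgrades this to the statement that $\phi_Q$ is a quasi-isomorphism of flat $L_\infty$-superalgebras, as desired.

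I expect the main obstacle to be the Koszul-duality bridge of the first paragraph --- in particular the reverse implication, that a quasi-isomorphism of Chevalley--Eilenberg cochain complexes \emph{reflects} a quasi-isomorphism of the underlying flat $L_\infty$-superalgebras. This is where the standing completeness/finiteness assumptions are indispensable: they guarantee that the filtration of $\CE$ by polynomial degree is complete and exhaustive, so that the comparison theorem for the associated spectral sequences applies, its $E_1$-page being $\bigodot\bigl(H(\fg,\mu_1)[1]\bigr)^*$ by a Künneth argument. The translation identity and the homotopy-equivalence step, by contrast, are purely formal. As a cross-check one can verify $\phi_Q^*=T_{sQ}\circ\phi^*\circ T_{sQ_\phi}^{-1}$ directly on low symmetric powers, where $T_{sQ}$ reduces to the finite shift $f\mapsto f+\partial_{sQ}f+\tfrac12\partial_{sQ}^2 f+\dotsb$.
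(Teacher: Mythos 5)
The paper offers no proof of this lemma---it is quoted from Kraft--Schnitzer \cite{Kraft:2022efy}---so your attempt has to be judged against the standard argument there. Two problems. First, your proposal only addresses item~5: items~1--4 are invoked as inputs (``by the preceding parts of \cref{lem:Qfacts}\dots'') rather than proved, so as a proof of the stated lemma it is incomplete from the outset. The omitted parts are not hard (item~1 is the observation that \(\mu_0^Q=\sum_k\frac1{k!}\mu_k(Q,\dotsc,Q)\); items~2--4 are direct computations with \(\exp(\widehat{sQ})\) and the coalgebra morphism property of \(\phi\)), but they are part of the statement.

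Second, and more seriously, the route you choose for item~5 has a genuine gap at exactly the point you flag. Your translation identity \(\phi_Q^*=T_{sQ}\circ\phi^*\circ T_{sQ_\phi}^{-1}\) is correct and the middle step is formal, but the two ends of the bridge---that \(\phi\) is an \(L_\infty\)-quasi-isomorphism if and only if \(\phi^*\colon\CE(\fg')\to\CE(\fg)\) is a quasi-isomorphism of cochain complexes---is \emph{not} true in the generality in which the lemma is stated. By definition a quasi-isomorphism of \(L_\infty\)-superalgebras is a morphism whose linear component \(\phi^1_1\) induces an isomorphism on \(\mu_1\)-cohomology; the functor \(\CE\) (a \emph{completed} symmetric algebra) does not in general send such morphisms to quasi-isomorphisms of complexes, nor reflect them. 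The word-length filtration you propose is complete and exhaustive, but its spectral sequence converges only conditionally, and the comparison theorem requires additional hypotheses (nilpotence, boundedness, degreewise finiteness); this is precisely why the literature on formal moduli works with \emph{filtered} quasi-isomorphisms of \(\CE\) rather than plain ones. So your reduction trades the actual content of the lemma for an inequivalent condition. The standard proof avoids dualisation entirely: one computes the linear component directly, \((\phi_Q)^1_1(x)=\sum_{n\ge0}\frac1{n!}\,s^{-1}\phi^1\bigl((sQ)^{\odot n}\odot sx\bigr)\), notes that with respect to the complete descending filtration that guarantees convergence of the twist (with \(Q\) in positive filtration degree) both \(\mu_1^Q\) and \((\phi_Q)^1_1\) agree with \(\mu_1\) and \(\phi^1_1\) on the associated graded, and applies the comparison theorem for complete exhaustive filtered complexes---or, alternatively, one uses that \(\phi\) admits an \(L_\infty\)-quasi-inverse over a field and that twisting preserves homotopies. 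Either of these closes the argument; your Koszul-duality detour does not.
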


\begin{theorem}[Minimal-model theorem]\label{thm:mm} Every flat $L_\infty$-superalgebra \((\mathfrak g,(\mu_i)_{i=1}^\infty)\)  is quasi-isomorphic to a (representative of the $L_\infty$-isomorphism class of) \emph{minimal model(s)} \((\mathfrak g^\circ,(\mu^\circ_i)_{i=2}^\infty)\), where \(\mathfrak g^\circ\cong H^\bullet_{\mu_1}(\mathfrak g)\) and \(\mu^\circ_1=0\). 
\end{theorem}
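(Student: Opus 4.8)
The plan is to prove this via the homotopy transfer theorem, realised most cleanly through the homological perturbation lemma in the Chevalley--Eilenberg coalgebra picture. Since \(\fg\) is flat we have \(\mu_1{}^2=0\), so \(H^\bullet_{\mu_1}(\fg)\) is defined; set \(\fg^\circ\coloneqq H^\bullet_{\mu_1}(\fg)\). The first step is to choose a contraction (special deformation retract) of the complex \((\fg,\mu_1)\) onto \((\fg^\circ,0)\): a quasi-isomorphism \(\iota\colon\fg^\circ\hookrightarrow\fg\), a projection \(p\colon\fg\twoheadrightarrow\fg^\circ\), and a degree \((-1,0)\) homotopy \(h\colon\fg\to\fg\) with \(p\iota=\mathrm{id}\), \(\iota p-\mathrm{id}=\mu_1 h+h\mu_1\), together with the side conditions \(h^2=hi=ph=0\). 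Over a field such data always exists; over the ground ring \(\mathbb K\) one needs \(\fg^\circ\) and \(\operatorname{im}\mu_1\) to split off as direct summands, which I would assume throughout (or record as a hypothesis), while the characteristic-zero assumption \(\mathbb F\supseteq\mathbb Q\) guarantees that the graded symmetrisers appearing below are invertible.

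Next I would lift this to the cofree cocommutative coalgebras. Write the Chevalley--Eilenberg coderivation on \(\bigodot\fg[1]\) as \(D=D_1+\delta\), where \(D_1\) is the coderivation induced by \(\mu_1\) alone and \(\delta\) encodes all \(\mu_{\ge 2}\). Via the symmetric ``tensor trick'' the contraction \((\iota,p,h)\) extends to a contraction of \((\bigodot\fg[1],D_1)\) onto \((\bigodot\fg^\circ[1],0)\), with lifted coalgebra maps \(I,P\) and a lifted homotopy \(\mathcal H\). Treating \(\delta\) as a perturbation of \(D_1\), the homological perturbation lemma then produces a perturbed contraction onto \((\bigodot\fg^\circ[1],D^\circ)\), with transferred differential \(D^\circ=P(\mathrm{id}-\delta\mathcal H)^{-1}\delta\,I\) and perturbed maps \(I_\infty=(\mathrm{id}-\mathcal H\delta)^{-1}I\), etc.; the perturbation series converges because it terminates on each \(\bigodot^{\le n}\).

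The remaining steps are interpretive. First, \(D^\circ\) is automatically square-zero by the perturbation lemma, and one checks it is a \emph{coderivation}, so its corestrictions to \(\bigodot^i\fg^\circ[1]\to\fg^\circ[1]\) define graded-antisymmetric maps \(\mu_i^\circ\) obeying the homotopy Jacobi identities, i.e.\ a flat \(L_\infty\)-structure on \(\fg^\circ\). Second, the linear part of \(D^\circ\) vanishes (the induced differential on cohomology is zero), so \(\mu_1^\circ=0\) and the model is minimal. Third, \(I_\infty\) is a morphism of differential graded coalgebras, hence an \(L_\infty\)-morphism \(\fg^\circ\rightsquigarrow\fg\) whose linear part is \(\iota\), and is therefore a quasi-isomorphism. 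Expanding \((\mathrm{id}-\mathcal H\delta)^{-1}=\sum_n(\mathcal H\delta)^n\) recovers the familiar sum-over-rooted-trees formulas for \(\mu_i^\circ\) and for \(I_\infty\), with internal edges decorated by \(h\), leaves by \(\iota\), vertices by the \(\mu_j\), and the root by \(p\).

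The main obstacle is the coderivation claim in the preceding step: one must verify that the perturbed \(D^\circ\) is a genuine coderivation of the cofree cocommutative coalgebra (so that it corresponds to an honest \(L_\infty\)-structure rather than a mere square-zero map) and that \(I_\infty,P_\infty\) remain coalgebra morphisms. This is exactly where the compatibility of the lifts \(I,P,\mathcal H\) with the coproduct does the real work, and where the side conditions \(h^2=hi=ph=0\) are needed to make those lifts well-behaved and the perturbation lemma applicable. Finally, for the uniqueness clause I would argue that any two minimal models are linked by an \(L_\infty\)-quasi-isomorphism whose linear part induces an isomorphism on \(\mu_1\)-cohomology; but for a minimal algebra \(\mu_1^\circ=0\), so the cohomology is the whole underlying space and this linear part is already a linear isomorphism. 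An \(L_\infty\)-morphism with invertible linear part is invertible, so the quasi-isomorphism is in fact an \(L_\infty\)-isomorphism, establishing uniqueness up to \(L_\infty\)-isomorphism.
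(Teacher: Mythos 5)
The paper states this theorem without proof, as a classical result, and the homotopy-transfer argument you give via the homological perturbation lemma in the Chevalley--Eilenberg coalgebra picture is precisely the standard proof found in the references the paper relies on (e.g.\ the treatments of minimal models in the \(L_\infty\)-algebra literature it cites). Your proposal is correct, including the care taken over the splitting hypotheses, the side conditions needed for the lifted contraction, the coderivation property of the transferred differential, and the uniqueness-up-to-\(L_\infty\)-isomorphism argument.
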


The twisting by a Maurer--Cartan element $Q$ is compatible with \autoref{thm:mm} in the following sense. Shifting to the coalgebra picture, let $\fg$ be a flat $L_\infty$-superalgebra, and let $\phi^\circ\colon \CEc(\fg)\to \CEc(\fg^\circ)$ denote the quasi-isomorphism to its minimal model $\fg^\circ$. By \cref{lem:Qfacts}, if $Q\in\fg^{1,0}$ is a Maurer--Cartan element, the map
\begin{equation}
\phi_Q^\circ \coloneqq  \exp(-\widehat{sQ_{\phi^\circ}})\phi^\circ \exp(\widehat{sQ}), 
\end{equation}
where
\begin{equation}
    Q_{\phi^\circ}=s^{-1}\phi^\circ{}^1(\sum_{i=1}^{\infty} \frac{1}{i!} (sQ)^{\odot i}),
\end{equation}
defines a quasi-isomorphism of  $L_\infty$-superalgebras 
\begin{equation}
  (\CEc(\fg),D_Q)\to (\CEc(\fg^\circ),D^\circ_{Q_{\phi^\circ}}).
\end{equation}
Since $Q_{\phi^\circ}$ is a  Maurer--Cartan element, the twisted minimal model is flat.
Moreover, $(\CEc(\mathfrak{g}_{Q_{\phi^\circ}}^\circ),D_{Q_{\phi^\circ}}^\circ)$ is a minimal model for $(\CEc(\mathfrak{g}_Q),D_Q )$.

\subsection{Quantum  \texorpdfstring{\(L_\infty\)}{L∞}-algebras and their twists}

Quantum \(L_\infty\)-algebras have their origins in closed string field theory \cite{Zwiebach:1992ie} and have been subsequently developed in e.g.~\cite{Markl:1997bj,Braun:2013lwa,Pulmannthesis,Doubek:2017naz, Jurco:2019yfd,Saemann:2020oyz,Doubek:2020rbg}, with the Batalin--Vilkovisky (BV) formalism \cite{1983PhRvD..28.2567B,Batalin:1985qj} often providing the motivation underlying the formal structures. In essence,  they generalise ordinary \(L_\infty\)-algebras in that they allow for an expansion in a formal parameter \(\hbar\) by imposing
the quantum master equation in addition to the classical master equation at zero loop order.

To identify the appropriate notions required for twists of quantum curved \texorpdfstring{\(L_\infty\)}{L∞}-algebras, the Batalin--Vilkovisky picture is instructive, so we briefly summarise the key points here. Detailed reviews may be found in \cite{Pulmannthesis,Doubek:2017naz, Jurco:2019yfd,Saemann:2020oyz,Doubek:2020rbg}. 

First, recall the classical master equation \(\{S_0,S_0\}=0\) for the classical Batalin--Vilkovisky action $S_0$ implies \(\Q_{\text{BV}}^2=0\), where $\Q_{\text{BV}}$ is the Batalin--Vilkovisky differential as defined by \(\Q_{\text{BV}}=\{S_0,-\}\) and $\{-,-\}$ is the Batalin--Vilkovisky antibracket.

When considering the partition function in the  Batalin--Vilkovisky formalism one must generalise the classical master equation \(\{S_0,S_0\}=0\) to the quantum master equation
\begin{equation}\label{eq:QME}
  (\Q_{\text{BV}}-2\mathrm i\hbar\Delta)S=  \{S,S\}-2\mathrm i\hbar\Delta S =0,
\end{equation}
where \(\Delta\) is the Batalin--Vilkovisky Laplacian\footnote{Properly speaking, the quantum master equation is a formal expression for local field theories due to the singular character of $\Delta$. However, \(\Delta\) can be rigorously defined in the case of a finite-dimensional field space, which can be implemented by e.g.\ using a periodic lattice of finite lattice spacing, which then provides ultraviolet and infrared regulators.} \cite{Batalin:1981jr}. This ensures that the expectation values of physical observables (gauge-invariant operators), which live in the cohomology of  $\Q_{\text{BV}}-2\mathrm i\hbar\Delta$, are independent of the choice of gauge.

 Solving the quantum master equation order-by-order in \(\hbar\) then produces the required ``counterterms'' of order \(\hbar^g\) that are to be added to the bare classical action \(S_0\), so that one has an \(\hbar\)-expansion of the quantum action
\begin{equation}
    S = S_0 + \hbar S_1 + \hbar^2 S_2+\dotsb,
\end{equation}
with \(S_0\) satisfying the classical master equation \(\{S_0,S_0\}=0\).  It may be that one can simply rescale the coefficients in \(S_0\) rather than add entirely new terms; in this case, the classical action already satisfies the quantum master equation. For instance, this is always the case in the absence of gauge symmetries, simply because the classical Batalin--Vilkovisky action has no antifield dependence so that $\Delta S_0=0$ from the get-go.  Then the usual \(L_\infty\)-algebraic perturbation theory continues to hold provided that one replaces \(D\mapsto D-\mathrm i\hbar\Delta^*\) where \(\Delta^*\) is the corresponding  \emph{dual} Batalin--Vilkovisky Laplacian in the coalgebra picture \cite{Jurco:2019yfd}. Their minimal models compute the loop-level scattering amplitudes\footnote{
    In order to define \(\Delta^*\) rigorously, one needs to regularise the \(L_\infty\)-algebra \(\mathfrak g\) to be finite-dimensional, by e.g.\ ultraviolet and infrared cutoffs, so that the quantum minimal model computes the regularised scattering amplitudes depending on the regularisation scales. From this one may perform the usual procedure of renormalisation by sending the regularisation scales to zero/infinity while renormalising the coupling constants to compensate.
} of quantum field theories \cite{Jurco:2019yfd}.\footnote{
    For a related and less amplitudes-oriented approach to renormalisation, see \cite{zbMATH05866383}, which works with local \(L_\infty\)-algebras.
}

This pictures leads one to the definition of a quantum curved $L_\infty$-algebra \cite{Zwiebach:1992ie, Markl:1997bj}:
\begin{definition}
  A quantum curved \(L_\infty\)-superalgebra \((\mathfrak g,(\mu^g_i)_{i,g=0}^\infty, \langle - , - \rangle)\) consists of
  \begin{enumerate}[\it(i)]
    \item a \(\mathbb Z\times\mathbb Z_2\)-graded \(\mathbb R\)-vector space \(\mathfrak g\),
    \item degree $(2-i,0)$ graded anti-symmetric $i$-linear maps $\mu^g_i\colon\bigwedge^i\mathfrak g\to\mathfrak g$, for each `genus' $g\geq0$, and
    \item a  degree $(3,0)$ graded-symmetric bilinear form 
    $\langle - , - \rangle \colon \mathfrak g^* \times \mathfrak g^*\to \mathbb{R}$ (or, dually, a homogeneous element of \(\bigodot\mathfrak g\) of word length \(2\) and degree \((3,0)\)),\footnote{
        If this is nondegenerate, we may invert it to produce a degree \((-3,0)\) graded-symmetric bilinear form \(\langle - , - \rangle \colon \mathfrak g \times \mathfrak g\to \mathbb{R}\). We do not assume nondegeneracy here, however, since it fails in the case of the inner-derivation algebra (\cref{def:quantum_inner_derivation_algebra}).
    }
  \end{enumerate}
  satisfying the following axioms (\cite{Zwiebach:1992ie, Markl:1997bj}).
  For a basis $\{t_a\}$ of $\fg$, with the bilinear form \(\langle-,-\rangle\) encoded as \(c^{ab}t_a \odot t_b\in \bigodot\mathfrak g\) for some structure constants \(c^{ab}\),
  we have, for any $n,g\geq0$ and $x_1,\ldots,x_n\in\fg$
  \begin{multline}
    \label{eq:main identity}
    0=\sum_{\mathclap{\substack{k+l=n+1\\g_1+g_2=g\\\sigma\in\operatorname{Sym}(n)}}} \frac{\chi(\sigma)(-1)^{l(k-1)}}{l!(k-1)!}\mu_{k}^{g_1}(\mu_{l}^{g_2}(x_{\sigma(1)},\ldots,x_{\sigma(l)}),x_{\sigma(l+1)},\ldots,x_{\sigma(n)})\\[-1.5em]
    +\frac1{2}\sum_{a}(-1)^{t_a+n}\mu_{n+2}^{g-1}(c^{ab}t_a,t_b,x_1,\ldots,x_n).
  \end{multline}
  Finally, the element
  \begin{equation}
    (-1)^{(n+1)t_a}c^{ab}t_a\otimes\mu_{n}^g(t_b, x_1,\ldots,x_n),
  \end{equation}
  is graded anti-symmetric for all $x_1,\ldots,x_n\in\fg$ and $n,g\geq0$. For $g=0$ this corresponds to cyclicity, see  \cref{def:cyc}.
\end{definition}

As for curved $L_\infty$-superalgebras, there is an equivalent and concise coalgebra definition of quantum curved $L_\infty$-superalgebras \cite{Markl:1997bj}, which encodes the main identity in a codifferential. Suppose we have the pair $(\mathfrak g, \langle - , - \rangle)$.  Then, to accommodate the genera of the products $\mu^g_i$ in the coalgebra picture, one first extends the tensor coalgebra,
\begin{equation}
\bigodot \mathfrak{g}[1]\llbracket\hbar\rrbracket\coloneqq \bigodot \mathfrak{g}[1]\otimes_\mathbb{R} \mathbb{R}\llbracket\hbar\rrbracket , 
\end{equation} where $\hbar$ is a formal parameter (of bidegree $(0,0)$) whose powers count the genus. There is a unique degree \((1,0)\)
second-order\footnote{A second-order coderivation $\theta$  satisfies 
\begin{equation}
\begin{aligned}
&(\upDelta \otimes \mathbb{1}) \circ \upDelta \circ \theta-\left(\mathbb{1}+\sigma_{231}+\sigma_{312}\right) \circ(\upDelta \otimes \mathbb{1}) \circ(\theta \otimes \mathbb{1}) \circ \upDelta \\
&+\left(\mathbb{1}+\sigma_{231}+\sigma_{312}\right) \circ(\theta \otimes \mathbb{1} \otimes \mathbb{1}) \circ(\upDelta \otimes \mathbb{1}) \circ \upDelta=0,
\end{aligned}
\end{equation}
where $\upDelta$ is the co-product on $\bigodot \mathfrak{g}[1]\llbracket\hbar\rrbracket$ (not to be confused with the Batalin--Vilkovisky Laplacian $\Delta$) and  $\sigma_{i j k}$ is the right action of the permutation group.  This is the dual of the seven-term  identity for second-order derivations as defined in \cite{AST_1985__S131__257_0,Akman:1995tm}. See \cite{Markl:1997bj} for details.} 
 coderivation
 \begin{equation}
\theta\colon \bigodot \mathfrak{g}[1]\llbracket\hbar\rrbracket \to \bigodot \mathfrak{g}[1]\llbracket\hbar\rrbracket
\end{equation}
such that 
\begin{equation}
  \pi_1\circ \theta=0, \quad \pi_2 \circ \theta(x)= \begin{cases}0, & x \in \bigodot^i \mathfrak{g}[1]\llbracket\hbar\rrbracket, i>0 \\ \frac{\hbar}{2} c^{ab}t_a \odot t_b, & x=1 \end{cases}
\end{equation}
where $\pi_i$ is the projector onto $\bigodot^i \mathfrak{g}[1]\llbracket\hbar\rrbracket$. These conditions further imply $\theta^2=0$. In the Batalin--Vilkovisky picture, $\theta$ will correspond to the dual of the Batalin--Vilkovisky Laplacian, so we will denote it by $\Delta^*$.

 Quantum curved $L_\infty$-superalgebra structures on \(\mathfrak g\) are then in bijection with degree \((1,0)\) coderivations $D$ on $\bigodot \mathfrak{g}[1]\llbracket\hbar\rrbracket$ such that
 \begin{equation}\label{eq:quantum_master_equation}
    (D+\hbar\Delta^*)^2 = 0
\end{equation}
as a formal power series in \(\hbar\). Switching to the dual graded derivation (i.e.~Batalin--Vilkovisky) picture, such coderivations are in bijection to solutions $S$ to the quantum master equation \eqref{eq:QME}, where $D$ and $\Delta^*$ are dual to $Q_{\text{BV}}$ and $\Delta$.

Given this, one may straightforwardly define the quantum generalisations of  Maurer--Cartan elements, twists and their compatibility with minimal models.

\begin{definition}A \emph{quantum Maurer--Cartan element} in a quantum curved \(L_\infty\)-superalgebra \((\fg, (\mu_i^g)_{i,g=0}^\infty, \Delta^*)\) is a family $\{Q^g\in \fg^{(1,0)}\}_{g=0}^\infty$ of elements of bidegree $(1,0)$,  such that, in the coalgebra picture, the following sum exists and is zero:
\begin{equation}\label{eq:coquantumMaurer-Cartan} 
 0 = (D +\hbar\Delta^*) \exp(s \sum_{g\geq0}\hbar^gQ^g).
\end{equation}
We will adopt the same notation as in the classical setting and let $Q\coloneqq \sum_{g\geq0}\hbar^gQ^g$.
\end{definition}
Note that, since \(\Delta^*\) always increases the length of words by two,
when restricting \eqref{eq:coquantumMaurer-Cartan} to words of length one, \(\Delta^*\) does not contribute, so that we have
\begin{equation}\label{eq:classical_MC_follows_from_quantum_MC}
     0 = \left.D\exp(s Q)\right|_{\mathfrak g[1]\llbracket\hbar\rrbracket}
     = \sum_{i,g=0}^\infty\frac1{i!}\hbar^g\mu^g_i(Q,\dotsc,Q).
\end{equation}
This has the same form as the classical Maurer--Cartan equation \eqref{eq:Maurer-Cartan}.

\begin{definition}
Given a quantum curved \(L_\infty\)-superalgebra \((\mathfrak g,(\mu_i^g)_{i,g=0}^\infty,\Delta^*)\) and
\begin{equation}Q\in\mathfrak g^{1,0}\llbracket\hbar\rrbracket,\end{equation}
the \emph{twist} by \(Q\) is the quantum curved \(L_\infty\)-superalgebra whose underlying graded vector space is that of \(\mathfrak g\) but whose operations are encoded by
\begin{equation}
    D_Q=\exp(-\widehat{sQ})D\exp(\widehat{sQ}).
\end{equation}
The Batalin--Vilkovisky Laplacian stays unchanged.
\end{definition}
\begin{lemma}
The twist of a quantum curved \(L_\infty\)-superalgebra by an element, if it exists, is indeed a quantum curved \(L_\infty\)-superalgebra.
\end{lemma}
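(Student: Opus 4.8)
The plan is to work entirely in the coalgebra picture and lean on the characterisation stated just above the lemma: a quantum curved \(L_\infty\)-superalgebra structure on \(\fg\) is the same datum as a degree \((1,0)\) coderivation \(D\) on \(\bigodot\fg[1]\llbracket\hbar\rrbracket\) satisfying \((D+\hbar\Delta^*)^2=0\), with the bilinear form — and hence \(\Delta^*\) — held fixed. Since the twist keeps \(\Delta^*\) unchanged, I must establish exactly two things about \(D_Q\coloneqq\exp(-\widehat{sQ})D\exp(\widehat{sQ})\): that it is again a degree \((1,0)\) coderivation, and that \((D_Q+\hbar\Delta^*)^2=0\) with the \emph{same} \(\Delta^*\).

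First I would check that \(\exp(\widehat{sQ})\) is a coalgebra automorphism. Because \(sQ\in\fg[1]^{(0,0)}\) is an even, word-length-one element, it is primitive, so left multiplication \(\widehat{sQ}\) is a degree-\((0,0)\) coderivation of the cocommutative bialgebra \(\bigodot\fg[1]\llbracket\hbar\rrbracket\); its exponential is therefore a grading-preserving coalgebra automorphism with inverse \(\exp(-\widehat{sQ})\). Conjugating the coderivation \(D\) by this automorphism yields a coderivation \(D_Q\) of the same degree \((1,0)\), which settles the first requirement. (Existence of \(D_Q\) as a coderivation encoding honest products \(\mu_i^{Q,g}\) is exactly the ``if it exists'' hypothesis, i.e.\ convergence of the \(Q\)-dependent sums as formal power series in \(\hbar\).)

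The key step is the interaction of \(\Delta^*\) with this conjugation. From its defining characterisation — the unique second-order coderivation whose only nonzero corestriction is \(1\mapsto\tfrac\hbar2 c^{ab}t_a\odot t_b\) and with \(\pi_1\circ\Delta^*=0\) — I would identify \(\Delta^*\) as nothing but left multiplication by the fixed bivector \(b\coloneqq\tfrac\hbar2 c^{ab}t_a\odot t_b\); equivalently, the coalgebra dual of the constant-coefficient Batalin--Vilkovisky Laplacian turns differentiation into symmetric multiplication, so that \(\Delta^*=\hat b\). One verifies directly that \(\hat b\) is a second-order coderivation with precisely the stated corestriction, and uniqueness of second-order coderivations with given \(\pi_{\leq 2}\)-components then forces \(\Delta^*=\hat b\). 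Since \(\Delta^*=\hat b\) and \(\widehat{sQ}\) are both multiplication operators on the graded-commutative algebra \(\bigodot\fg[1]\llbracket\hbar\rrbracket\) and \(sQ\) is even, they commute, \(\widehat{sQ}\,\Delta^*=\Delta^*\,\widehat{sQ}\), whence \(\exp(\pm\widehat{sQ})\) commutes with \(\Delta^*\).

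Combining these, \(D_Q+\hbar\Delta^*=\exp(-\widehat{sQ})(D+\hbar\Delta^*)\exp(\widehat{sQ})\), so squaring and using \((D+\hbar\Delta^*)^2=0\) gives
\begin{equation}
(D_Q+\hbar\Delta^*)^2=\exp(-\widehat{sQ})\,(D+\hbar\Delta^*)^2\,\exp(\widehat{sQ})=0 .
\end{equation}
With \(\Delta^*\) unchanged, \((\fg,D_Q,\Delta^*)\) is a quantum curved \(L_\infty\)-superalgebra by the coalgebra characterisation. I expect the only genuine obstacle to be the key step: without recognising that the constant-coefficient \(\Delta^*\) dualises to a \emph{pure multiplication} operator, one cannot conclude that it commutes with \(\exp(\widehat{sQ})\) — a bona fide second-order differential operator would not — and the clean conjugation identity above would fail. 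As a sanity check, the same conclusion follows in the dual Batalin--Vilkovisky picture, where conjugation by \(\exp(\widehat{sQ})\) becomes a translation of fields by \(Q\) and the quantum master equation is preserved precisely because \(\Delta\) is translation-invariant.
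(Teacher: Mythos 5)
Your proposal is correct and follows essentially the same route as the paper's proof: the crux in both is that \(\Delta^*\) commutes with \(\exp(\widehat{sQ})\) because it merely inserts the fixed pair \(c^{ab}t_a\odot t_b\), after which conjugation by the coalgebra automorphism \(\exp(\widehat{sQ})\) preserves nilquadraticity and the second-order property. Your explicit identification \(\Delta^*=\hat b\) and the check that \(D_Q\) is again a coderivation are slightly more detailed versions of steps the paper states more briefly, but the argument is the same.
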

\begin{proof}
First, we note that
\begin{equation}
    \Delta^* = \exp(-\widehat{sQ})\Delta^*\exp(\widehat{sQ}),
\end{equation}
or equivalently
\begin{equation}
    \exp(\widehat{sQ})\Delta^* = \Delta^*\exp(\widehat{sQ}),
\end{equation}
since the effect of \(\Delta^*\) is to simply insert pairs \(c^{ab}t_at_b\), which then commute with \(\exp(\widehat{sQ})\).

We must check the quantum master equation \eqref{eq:quantum_master_equation}, the nilquadraticity of \(\Delta^*\), as well as the second-order coderivation property of \(\Delta^*\).

The first and second are immediate; given any map $X$ such that $X^2=0$, it follows
\begin{equation}
    X_Q^2= \exp(-\widehat{sQ})X\exp(\widehat{sQ})\exp(-\widehat{sQ})X\exp(\widehat{sQ})=0.
\end{equation}
The third is also clear since the map
\begin{equation}
    x\mapsto \exp(-\widehat{sQ})x\exp(\widehat{sQ})
\end{equation}
is an automorphism of counital coalgebras and hence preserves the second-order property.
\end{proof}

Given a (non-curved) quantum \(L_\infty\)-superalgebra \((\mathfrak g,(\mu_i^g)_{i=1}^\infty,\Delta^*)\), then by the homological perturbation lemma one can define its \emph{minimal model} \cite{Doubek:2017naz,Jurco:2019yfd}, which is a quantum \(L_\infty\)-superalgebra \((\operatorname H(\mathfrak g),((\mu^\circ)_i^g)_{i=1}^\infty,(\Delta^*)^\circ)\) whose underlying graded vector space is \(\operatorname H(\fg)\coloneq\operatorname H_{\mu_1^0}(\mathfrak g)\), the cohomology of $\mu_1^0$.

Note that, in the special case where \(\mathfrak g\) describes a physical theory, the scattering amplitudes must only involve fields rather than antifields, so that, apart from the Batalin--Vilkovisky Laplacian \(\Delta^*\) the only nonzero structure map of the quantum minimal model \(\operatorname H(\mathfrak g)\) is \(\mu_i^g\colon\mathfrak g^1\otimes\dotso\otimes\mathfrak g^1\to\mathfrak g^2\). This means that the homotopy Maurer--Cartan action \(S_{\operatorname H(\mathfrak g)}\) of \(\operatorname H(\mathfrak g)\) does not depend on antifields, and so in particular its Batalin--Vilkovisky Laplacian vanishes:
\begin{equation}\label{eq:vanishing-laplacian-for-minimal-model}
    \Delta S_{\operatorname H(\mathfrak g)} =0.
\end{equation}
Therefore, if \(S_{\operatorname H(\mathfrak g)}\) satisfies the quantum master equation, it also satisfies the classical master equation. In other words, forgetting \(\Delta^*\), then \(\operatorname H(\mathfrak g)\) may be regarded as a cyclic (non-quantum) \(L_\infty\)-superalgebra over \(\mathbb R\llbracket\hbar\rrbracket\).

\section{Effective actions in the language of \(L_\infty\)-algebras}\label{sec:effective}
For physics applications (e.g.\ Higgsing, anomalies) we need to formulate the physical concept of one-particle-irreducible effective actions in terms of \(L_\infty\)-algebras.
The defining quality of the one-particle-irreducible effective action is such that its tree amplitudes equal the loop amplitudes of the original theory.
This may be phrased in terms of \(L_\infty\)-algebras as follows.
Given a quantum \(L_\infty\)-superalgebra \(\mathfrak g\) (the original theory) whose quantum minimal model is \(\operatorname H(\mathfrak g)\), suppose that \eqref{eq:vanishing-laplacian-for-minimal-model} holds so that \(\operatorname H(\mathfrak g)\) may also be regarded as a (non-quantum) cyclic \(L_\infty\)-superalgebra over \(\mathbb R\llbracket\hbar\rrbracket\). Then a \emph{one-particle-irreducible effective theory} is a (non-quantum) cyclic \(L_\infty\)-superalgebra \(\mathfrak g_{\mathrm{1PI}}\), whose underlying graded supervector space coincides with that of \(\mathfrak g\), and whose (non-quantum) minimal model \(\operatorname H(\mathfrak g_{\mathrm{1PI}})\) is isomorphic as a cyclic \(L_\infty\)-superalgebra over \(\mathbb R\llbracket\hbar\rrbracket\) to \(\operatorname H(\mathfrak g)\).

It is not obvious that such a \(\mathfrak g_{\mathrm{1PI}}\) exists. When \(\mathfrak g\) represents a physical quantum field theory, however, then \(\mathfrak g_{\mathrm{1PI}}\) is known to exist, either by an abstract construction in terms of a Legendre transformation of the free energy, or by an explicit diagrammatic construction. Both constructions may be phrased in terms of \(L_\infty\)-superalgebras as follows.
\begin{equation}
  \label{eq:sequence}
      \begin{tikzcd}
        &\fg \ar[dl,dashed,"{\text{Add sources } J}"'] \ar[dr,rightsquigarrow,"{\text{Quantum min.\ model}}"]&\\
        \operatorname{inn}(\fg)\ar[d,dashed,"{\text{Legendre transf.}}"']&&\operatorname H(\fg)\ar[d, dashed, "{\text{ Forget } \Delta^*}"]\\
        \fg_{1PI}\ar[dr, rightsquigarrow,"\text{Classical min.\ model}"']&&\operatorname H(\fg)\ar[dl,rightsquigarrow, "{L_\infty\text{-isomorphism}}"]\\
        &\operatorname H(\fg_{1PI})&
      \end{tikzcd}
    \end{equation}
    In the diagram \eqref{eq:sequence} the squiggly arrows represent equivalences of classical or quantum $L_\infty$-superalgebras, and the dashed ones are operations performed on (quantum) $L_\infty$-superalgebras, producing new ones.

\subsection{Abstract construction of the effective action}
Abstractly, the construction of the one-particle-irreducible effective action consists of the following steps:
\begin{enumerate}
\item To every field \(\phi_i\) in the action \(S[\phi]\), one adds a source \(J^i\) to obtain the source-extended action
\begin{equation}\label{eq:source-extended action}
    S'[\phi,J] = S[\phi]+ \int \phi_iJ^i.
\end{equation}
\item\label{item:step2} One takes the partition function by integrating out \(\phi\), leaving a functional \(W[J]\) of \(J\) alone:
\begin{equation}
    \exp(-\mathrm iW[J]) = \int\mathrm D\phi\exp(\mathrm iS'[\phi,J]).
\end{equation}
\item\label{item:step3} Finally, one takes the Legendre transform to obtain the one-particle-irreducible effective action
\begin{equation}
    \Gamma(\phi_\mathrm{cl})=\sup_J\left(-W[J]-\int J^i\phi_{\mathrm{cl},i}[J]\right),
\end{equation}
where
\begin{equation}\label{def:antifield-classical}
    \phi_{\mathrm{cl},i}[J]=\langle  \phi \rangle_J= \int\mathrm D\phi ~\phi \exp(\mathrm iS'[\phi,J]).
\end{equation}
\end{enumerate}
At least formally, we may phrase the above steps, generalised to the Batalin--Vilkovisky setting \cite{Batalin:1981jr},  in the language of \(L_\infty\)-algebras.

The first step, namely to introduce source terms into the action, corresponds to taking the inner-derivation algebra \(\operatorname{inn}(\mathfrak g)\) \cite{Sati:2008eg}, whose associated Chevalley--Eilenberg algebra is the Weil algebra, reviewed in e.g.\ \cite[§3.2]{Borsten:2024gox}.
This formally introduces sources for all elements of \(\mathfrak g\) (including ghosts and antifields),
but there are no antifields for the source, such that the cyclic structure is necessarily degenerate.
(See~\cite{Batalin:2013xpa} for the necessity of introducing sources for antifields in the Batalin--Vilkovisky formalism.)
Let us recall the definition.
\begin{definition}\label{def:inner_derivation_algebra}
Suppose that \((\mathfrak g,\langle-,-\rangle_{\mathfrak g^*})\) is an cyclic \(L_\infty\)-superalgebra with
the \(L_\infty\)-algebra structure given by a codifferential \(D_{\mathfrak g}\) on \(\bigodot\mathfrak g[1]\). The \emph{inner-derivation algebra} is the cyclic \(L_\infty\)-superalgebra
\begin{equation}
    \operatorname{inn}(\mathfrak g)\coloneqq\mathfrak g\oplus\mathfrak g[1]
\end{equation}
and with the differential \(D_{\operatorname{inn}(\mathfrak g)}\) on 
\begin{equation}
    \bigodot\operatorname{inn}(\mathfrak g)[1]
    =\bigodot\mathfrak g[1]
    \otimes
    \bigodot\mathfrak g[2]
\end{equation}
given by
\begin{equation}\label{eq:deltait}
    D_{\operatorname{inn}(\mathfrak g)} = 
    D_{\mathfrak g}
    + 
    \deltait,
\end{equation}
where \(\deltait\) is the degree-shift map that maps \(\mathfrak g[2]\) to \(\mathfrak g[1]\) inside \(\bigodot\operatorname{inn}(\mathfrak g)[1]\) (and annihilates \(\mathfrak g[1]\)) and extended by the graded Leibniz rule to the entirety of \(\bigodot\operatorname{inn}(\mathfrak g)[1]\), and
the action of \(D_{\mathfrak g}\) is extended from generators in \(\mathfrak g[1]\) to generators in \(\mathfrak g[2]\) via the commutation rule
\begin{equation}
    D_{\mathfrak g}\deltait=
    -\deltait D_{\mathfrak g}.
\end{equation}
The cyclic structure is given by
\begin{equation}\label{eq:inn_g_cyclic}
    \langle x+y,x'+y'\rangle_{\operatorname{inn}(\mathfrak g)^*}
    =
    \langle x,x'\rangle_{\mathfrak g^*}
\end{equation}
for \(x,x'\in\mathfrak g^*\) and \(y,y'\in\mathfrak g[1]^*\).
\end{definition}
The underlying vector space of the inner-derivation algebra may be interpreted as
\begin{equation}
    \operatorname{inn}(\mathfrak g)=\overbrace{\mathfrak g}^{\mathclap{\text{fields/antifields}}}\oplus\underbrace{\mathfrak g[1]}_{\mathclap{\text{sources}}}.
\end{equation}
Note that this definition formally introduces `sources' for every element in \(\mathfrak g\), including for antifields, unlike the usual discussion \cite{Batalin:1981jr}. This is at least harmless for gauge-fixed actions, where antifield-containing terms have been excised from the action.

Note that the cyclic structure \eqref{eq:inn_g_cyclic} is degenerate and fails to invert to produce a pairing on \(\operatorname{inn}(\mathfrak g)\) since there are no antifields for the sources.\footnote{
    If one defines a degenerate inner product on \(\operatorname{inn}(\mathfrak g)\), it will generally fail to obey the cyclic identity \eqref{eq:cyclicity}.
    Consider, for instance, a scalar field theory with action \(\int\phi\square\phi+J\phi\) with a source term \(\phi J\),
    and consider the degenerate metric pairing \(\phi\) with \(\phi^+\) and \(J\) with nothing.
    One then has \(\langle\phi,\mu_1(J)\rangle\sim\langle\phi,\phi^+\rangle\ne0\) whereas \(\langle\mu_1(\phi),J\rangle\sim\langle\phi^+,J\rangle=0\), violating \eqref{eq:cyclicity}.
}
(It is also difficult to add antifields for sources and then extend the \(L_\infty\)-superalgebra structure to sources in a suitable way.)
For \(\operatorname{inn}(\mathfrak g)\), if the cyclic structure on \(\mathfrak g\) is nondegenerate,
there exists a canonical pseudoinverse \(\langle-,-\rangle_{\operatorname{inn}(\mathfrak g)}\) of \(\langle-,-\rangle_{\operatorname{inn}(\mathfrak g)^*}\) given by
\begin{equation}
    \langle(x,y),(x',y')\rangle_{\operatorname{inn}(\mathfrak g)}
    =\langle x,x'\rangle_{\mathfrak g}
\end{equation}
for \(x,x'\in\mathfrak g\) and \(y,y'\in\mathfrak g[1]\),
and with respect to this structure one can define the homotopy Maurer--Cartan action \eqref{eq:hMC_definition} for \(\operatorname{inn}(\mathfrak g)\). The result of the additional \(\deltait\) term then produces, in the corresponding homotopy Maurer--Cartan action,  quadratic  source terms \(\phi_iJ^i\) in \eqref{eq:source-extended action}. The additional terms coming from extending \(D_{\mathfrak g}\) are needed for gauge invariance.

The above definition extends straightforwardly to the quantum case:
\begin{definition}\label{def:quantum_inner_derivation_algebra}
Suppose that \(\mathfrak g\) is a quantum \(L_\infty\)-superalgebra with
the quantum \(L_\infty\)-algebra structure given by the pairing \(\langle-,-\rangle_{\mathfrak g^*}\) and a codifferential \(D_{\mathfrak g}\) on \(\bigodot\mathfrak g[1]\llbracket\hbar\rrbracket\). The \emph{inner-derivation algebra} is
the quantum \(L_\infty\)-superalgebra
\begin{equation}
    \operatorname{inn}(\mathfrak g)\coloneqq\mathfrak g\oplus\mathfrak g[1]
\end{equation}
with the degenerate cyclic pairing
\begin{equation}
    \langle(x,y),(x',y')\rangle_{\operatorname{inn}(\mathfrak g)^*}
    =\langle x,x'\rangle_{\mathfrak g^*}
\end{equation}
for \(x,x'\in\mathfrak g^*\) and \(y,y'\in\mathfrak g[1]^*\),
and with the differential \(D_{\operatorname{inn}(\mathfrak g)}\) on 
\begin{equation}
    \bigodot\operatorname{inn}(\mathfrak g)[1]\llbracket\hbar\rrbracket
    =\bigodot\mathfrak g[1]\llbracket\hbar\rrbracket
    \otimes
    \bigodot\mathfrak g[2]\llbracket\hbar\rrbracket
\end{equation}
given by
\begin{equation}
    D_{\operatorname{inn}(\mathfrak g)} = 
    D_{\mathfrak g}
    + 
    \deltait,
\end{equation}
where \(\deltait\) is the degree-shift map that maps \(\mathfrak g[2]\llbracket\hbar\rrbracket\) to \(\mathfrak g[1]\llbracket\hbar\rrbracket\) inside \(\bigodot\operatorname{inn}(\mathfrak g)[1]\) (and maps \(\mathfrak g[1]\llbracket\hbar\rrbracket\) to zero) and extended by the graded Leibniz rule to the entirety of \(\bigodot\operatorname{inn}(\mathfrak g)[1]\llbracket\hbar\rrbracket\), and
the action of \(D_{\mathfrak g}\) is extended from generators in \(\mathfrak g[1]\llbracket\hbar\rrbracket\) to generators in \(\mathfrak g[2]\llbracket\hbar\rrbracket\) via the commutation rule
\begin{equation}
    0=D_{\mathfrak g}\deltait
    +\deltait D_{\mathfrak g}
    =\Delta^*\deltait+\deltait\Delta^*,
\end{equation}
where \(\Delta^*\) is the Batalin--Vilkovisky Laplacian encoding \(\langle-,-\rangle\).
\end{definition}

Thus, starting with a quantum \(L_\infty\)-superalgebra \(\mathfrak g\) describing a quantum field theory with a nondegenerate cyclic structure, we can construct  the homotopy Maurer--Cartan action,\footnote{
    In \eqref{eq:actionweil},
    the only additional terms are the source terms \(\langle\Phi,J\rangle\), arising from the \(\deltait\) in \eqref{eq:deltait}.
    The extension of \(D_{\operatorname{CE}(\mathfrak g)}\) to \(\bigodot\mathfrak g[2]^*\)
    does not yield additional terms in the action:
    the resulting terms in \(\mu_i\) are always valued in the shifted elements \(\mathfrak g[1]\subset\operatorname{inn}(\mathfrak g)\) rather than 
    \(\mathfrak g\subset\operatorname{inn}(\mathfrak g)\), and since \(\mathfrak g[1]\) lies in the kernel of \(\langle-,-\rangle_{\operatorname{inn}(\mathfrak g)}\) the resulting terms in \(\mu_i\) vanish inside \eqref{eq:hMC_definition}. 
}
\begin{equation}\label{eq:actionweil}
 S_{\operatorname{inn}(\mathfrak g)}[\Phi, J]= S[\Phi] +\langle \Phi, J\rangle,
 \end{equation} of \(\operatorname{inn}(\mathfrak g)\), which is now an (\(\hbar\)-power-series-valued) functional on the differential graded manifold \(\operatorname{inn}(\mathfrak g)[1]\), whose underlying graded space is \(\operatorname{inn}(\mathfrak g)[1]\cong\mathfrak g[1]\oplus\mathfrak g[2]\) and \(\Phi\in\mathfrak g[1]\) (original fields and antifields). The role of the  sources for the antifields (a subset of the $J$) is to merely to maintain manifest Batalin--Vilkovisky invariance, i.e.~solve the quantum master equation for $\operatorname{inn}(\fg)$.

Given the above,  we may now perform steps~\ref{item:step2} and \ref{item:step3} as usual, at least formally.
First we must gauge-fix; this distinguishes a choice of antifields  within the $L_\infty$-superalgebra $\fg$ so that we may write $\Phi = (\phi, \phi^+)$. Then, for a fixed configuration of  classical expectation values of the antifields $\phi^+_{\mathrm{cl}}$ as given in \eqref{def:antifield-classical}, we  define the analogue of the usual generating functional of connected correlations functions with sources $J=(J_\phi, J_{\phi^+})$, 
\begin{equation}
    \exp(-\mathrm iW[\phi^+_{\mathrm{cl}}, J])=\int\mathrm D\Phi\,\delta\left(\phi^+-\frac{\partial( \Psi + \phi^+_{\mathrm{cl}} \phi)}{\partial \phi}\right) \exp\mleft(\mathrm iS_{\operatorname{inn}(\mathfrak g)}\mright)
    \end{equation}
    where $\Psi$ is the gauge-fixing fermion (that identifies the Lagrangian submanifold of fields $\phi$). 
    
    The classical expectation value of the fields is then defined  by
    \begin{equation}
    \phi_{\mathrm{cl}} =\left. \frac{\partial W}{\partial J_\phi}\right|_{J_\phi=0}.
    \end{equation}
    To obtain a functional (the effective action) \(\Gamma(\Phi_\mathrm{cl})\) as a function of \(\Phi_\mathrm{cl}\), one takes the Lagendre transform as usual after setting the extraneous fields \(J_{\phi^+}\) to zero:
    \begin{equation}
    \Gamma(\Phi_\mathrm{cl})
      =\sup_{J_{\phi}}\left(-\left.W[\phi^+_{\mathrm{cl}}, J]\right|_{J_{\phi^+}=0}-\langle J_\phi, \phi_\mathrm{cl}\rangle\right)
\end{equation}
This reproduces the Batalin--Vilkovisky effective action of \cite{Batalin:1981jr,Gomis:1994he}, which satisfies the classical master equation in the space of classical (anti)fields, i.e.~the Zinn-Justin equation. Thus \(\Gamma\) is the homotopy Maurer--Cartan action of a cyclic \(L_\infty\)-superalgebra over \(\mathbb R\llbracket\hbar\rrbracket\), which we denote as \(\mathfrak g_{\mathrm{1PI}}\).

\begin{example}Consider $\phi^4$ scalar field theory:
\begin{equation}
    S_{\mathfrak g}=\int\frac12\phi\square\phi + \frac{\lambda}{4!}\phi^4 .
\end{equation}
The corresponding quantum \(L_\infty\)-algebra \(\mathfrak g\) is concentrated in degrees \(1\) and \(2\) for the scalar field \(\phi\) and its antifield \(\phi^+\) respectively and the only non-trivial brackets are 
\begin{equation}
\mu^0_1(\phi) = \Box \phi, \qquad \mu^0_3(\phi, \phi, \phi) = \phi^3.
\end{equation}
Then the corresponding \(\operatorname{inn}(\mathfrak g)=\mathfrak g\oplus\mathfrak g[1]\) is concentrated in \(1\) (field \(\phi\)), \(2\) (antifield \(\phi^+\)), \(0\) (source \(J_{\phi^+}\) for \(\phi^+\)), and \(1\) (source \(J_\phi\) for \(\phi\)), respectively. Solutions of the  classical master equation satisfy the  quantum master equation\footnote{Since the $J, J_{\phi^+}$ are independent of $\phi, \phi^+$, $ S_{\operatorname{inn}(\mathfrak g)}$ is trivially in the kernel of the Batalin--Vilkovisky Laplacian.} so the corresponding Maurer--Cartan action is
\begin{equation}
    S_{\operatorname{inn}(\mathfrak g)}[\phi, \phi^+,J_\phi,J_{\phi^+}]=\int\frac12\phi\square\phi+ \frac{\lambda}{4!}\phi^4+\phi J_\phi+\phi^+J_{\phi^+},
\end{equation}
where the final  two terms are the the expected   source couplings.

In this case the gauge fixing is trivial and thus
\begin{equation}
    \exp(-\mathrm iW[\phi^+_{\mathrm{cl}}, J])=\int\mathrm D\phi\,\exp\mathrm i\  S_{\operatorname{inn}(\mathfrak g)}[\phi, \phi_{\mathrm{cl}}^+, J_\phi,J_{\phi^+}].
\end{equation}
Then 
\begin{equation}
\Gamma[\phi_{\mathrm{cl}}, \phi^+_{\mathrm{cl}}]= \sup_{J_{\phi}}\left(-W[\phi^+_{\mathrm{cl}}, J_\phi]
-\langle J_\phi, \phi_\mathrm{cl}\rangle\right),
\end{equation}
where $W[\phi^+_{\mathrm{cl}}, J_\phi]= W[\phi^+_{\mathrm{cl}}, J]|_{J_{\phi^+}=0}$,  is precisely the Batalin--Vilkovisky effective action solving  the classical master equation. Consequently, $\Gamma[\phi_{\mathrm{cl}}, \phi^+_{\mathrm{cl}}]$ defines a classical (flat) $L_\infty$-algebra, whose minimal model encodes the loop amplitudes of the original model.
\end{example}

\subsection{Diagrammatic construction}
Rather than going through the formal construction, we may directly implement the diagrammatic construction of the quantum effective action \cite[(11.63)]{Peskin:1995ev}, that is,
\begin{equation}\label{eq:effective_action}
    \Gamma = S + \frac12\mathrm i\hbar\operatorname{tr}\ln\left(-\frac{\deltaup^2S}{(\deltaup\phi)^2}\right)
    -\mathrm i\sum\hbar^L\text{(\(L\)-loop 1PI conn.\ diag.)}.
\end{equation}
(This does not implement the convexity property directly, but this point is not relevant for our purposes.)

Suppose that we are given a
finite-dimensional quantum \(L_\infty\)-superalgebra \(\mathfrak g\), such as one obtained from a gauge-fixed quantum field theory action with suitable infrared and ultraviolet cutoffs so as to render the field space finite-dimensional.
We then consider formally an \(L_\infty\)-superalgebra structure over \(\mathbb R[\![\hbar]\!]\) on the underlying graded supervector space of \(\mathfrak g\otimes\mathbb R[\![\hbar]\!]\) given by
\begin{multline}
    \tilde\mu_i(x_1,\dotsc,x_n)
    =\\
    \sum_{g=0}^\infty \hbar^g\left(\mu_i^g(x_1,\dotsc,x_n)
    +\tilde\mu_i^{g,\det}(x_1,\dotsc,x_n)
    +\tilde\mu_i^{g,\mathrm{conn}}(x_1,\dotsc,x_n)\right),
\end{multline}
where \(\tilde\mu_i^{g,\det}\) and \(\tilde\mu_i^{g,\mathrm{conn}}\) implement the formula \eqref{eq:effective_action} (described below).

In detail, the term \(\tilde\mu_i^{g,\det}\) is defined by the structure constant \(\tilde c^g_{a_1\dotso a_n}\), whose definition
\begin{align}
    \kappa^g_{ab;c_1\dotso c_n}
    &=
    c^g_{abc_1\dotso c_n}
    -\deltaup_{g0}c^0_{ab}\notag\\
    \tilde c^g_{a_1\dotso a_n}&=\frac1p(-1)^{p+1}\binom n{n_1,\dotsc,n_p}\\
    &\quad\times\sum_{\mathclap{\substack{n_1+\dotsb+n_p=n\\g_1+\dotsb+g_p=g-1}}}\kappa^{g_1}_{b'_1b_2;a_1\dotso a_{n_1}}\hat c^{b_2b_2'}
    \kappa^{g_2}_{b'_2b_3;a_{n_1+1}\dotso a_{n_1+2}}\hat c^{b_3b_3'}\dotsm \kappa^{g_p}_{b_pb_1;a_{n_1+\dotsb+n_{p-1}+1}\dotso a_n}\hat c^{b_1b_1'}\notag
\end{align}
implements the Taylor series expansion \(\ln(a+x)=\sum_{p=1}^\infty p^{-1}(-1)^{p+1}a^{-p}x^p+\ln(a)\) (with the cosmological-constant term \(\ln(a)\) dropped), and where \(c^g_{a_1\dotso a_i}\) are the structure constants of \(\mu^g_i\), and where \(\hat c^{ab}\) is the inverse of \(c^0_{ab}\).

The term \(\tilde\mu_k^{g,\mathrm{conn}}\) is implemented as follows.
Given a pseudograph (i.e.\ undirected graph with self-loops \(\Feyn{fflof}\), parallel edges \(\maxis{\Feyn{ff0flfluf0f}}\), and external legs allowed) \(G\) that corresponds to a one-particle-irreducible Feynman diagram, let \(\operatorname{HE}(G)\) be its set of half-edges. (A self-loop still corresponds to two different half-edges, but an external leg corresponds to only one half-edge.) Let the indices \(\alpha,\beta,\dotsc\) range over \(\operatorname{HE}(G)\); each internal edge \(e\in\operatorname E(G)\) may then be represented by an unordered pair of half-edges.
For each pseudograph \(G\) with \(v\) vertices of degrees \(d_1,\dotsc,d_v\) and \(L\) loops and \(k\) external legs (so that there are \(d_1+\dotsb+d_v\) half-edges),
define the expression
\begin{multline}
    \tilde c^{g,G;a_{d_1+\dotsb+d_v+1}\dotso a_{d_1+\dotsb+d_v+k}}
    = -\mathrm i\frac1{|\operatorname{Aut}(G)|}\prod_{\{a_r,a_s\}\in\operatorname E(G)}\deltaup^{a_ra_s}\\\times\sum_{\mathclap{g_1+\dotsb+g_v=g-L}}c^{g_1}_{a_1\dotso a_{d_1}}c^{g_2}_{a_{d_1+1}\dotso a_{d_1+d_2}}\dotsm c^{g_v}_{a_{d_1+\dotsb+d_{v-1}+1}\dotso a_{d_1+\dotsb+d_v}},
\end{multline}
where \(\operatorname E(G)\) is the set of edges regarded as unordered pairs of half-edges; for external legs, \(\operatorname E(G)\) pairs up the external legs with the free indices \begin{equation}a_{d_1+\dotsb+d_v+1},\dotsc,a_{d_1+\dotsb+d_v+k}.\end{equation} The factor \(|\operatorname{Aut}(G)|\) is the order of the automorphism group of \(G\) (fixing external legs). Then \(\tilde\mu_k^{g,\mathrm{conn}}\) is defined by the structure constant
\begin{equation}
    \tilde c^{g,\mathrm{conn}}_{a_0a_1\dotso a_k}\coloneqq\sum_G\tilde c^{g,G}_{(a_0a_1\dotso a_k)},
\end{equation}
where the parentheses indicate normalised graded symmetrisation and where the sum ranges over the connected one-particle-irreducible pseudographs \(G\) with \(k+1\) external legs, and where we have lowered the indices using the cyclic structure.

\subsection{One-particle-irreducible Maurer--Cartan element}
Given that the effective action associated to a quantum \(L_\infty\)-superalgebra \(\mathfrak g\) defines an \(L_\infty\)-algebra \(\mathfrak g_{\mathrm{1PI}}\), we may consider the set of its Maurer--Cartan elements; these then constitute the set of quantum-corrected on-shell backgrounds.
Note that, using \eqref{eq:classical_MC_follows_from_quantum_MC}, a quantum Maurer--Cartan element of the quantum minimal model \(\operatorname H(\mathfrak g)\) is automatically a (non-quantum) Maurer--Cartan element of \(\operatorname H(\mathfrak g)\) regarded as a (non-quantum) \(L_\infty\)-superalgebra and, therefore, corresponds to a Maurer--Cartan element of \(\mathfrak g_{\mathrm{1PI}}\) up to equivalence.
For brevity, let us refer to a Maurer--Cartan element of \(\mathfrak g_{\mathrm{1PI}}\) as a \emph{one-particle-irreducible Maurer--Cartan element}.
Since \(\mathfrak g_{\mathrm{1PI}}\) is obtained from \(\mathfrak g\) by \(\mathcal O(\hbar)\) corrections,
the \(\mathcal O(\hbar^0)\) component of a one-particle-irreducible Maurer--Cartan element is a (non-quantum) Maurer--Cartan element of the (non-quantum) \(L_\infty\)-algebra \((\mathfrak g,\mu_i^0)\) obtained by forgetting all higher-genus operations \(\{\mu_i^g\}_{g>0}\). (In physics terms, this means that a stationary point of the quantum effective action is perturbatively corrected from a stationary point of the bare classical action by \(\mathcal O(\hbar)\) terms.)

\begin{example}[1PI action of scalar $\phi^4$.] Consider an \(L_\infty\)-algebra \(\mathfrak g\) corresponding to the classical scalar field theory
\begin{equation}
    S = \int\mathrm d^4\,\frac12\phi(\square-m^2)\phi-\frac1{4!}\lambda\phi^4
\end{equation}
The quantum effective action to one loop order is then\footnote{dropping a cosmological constant term, which does not enter into the (quantum) \(L_\infty\)-algebra}  of the form \cite[(16.2.15)]{Weinberg:1996kr}
\begin{equation}
\begin{split}
    \Gamma &= \int\mathrm d^4x\,\frac12\phi(\square-m_{\mathrm R}^2)\phi-\frac1{4!}\lambda_{\mathrm R}\phi^4\\
    &\qquad-\frac\hbar {64\piup^2}\left(m_{\mathrm R}^2+\frac12\lambda_{\mathrm R}\phi^2\right)^2\ln\left(m_{\mathrm R}^2+\frac12\lambda_{\mathrm R}\phi^2\right)+\dotsb \\
    &= S +   \mathcal O(\hbar),\label{eq:effective-action}
\end{split}
\end{equation}
where the ellipses include   neglected  derivative interaction terms and
\begin{align}
    \lambda_{\mathrm R}&=\lambda+\mathcal O(\hbar)&
    m_{\mathrm R}&=m+\mathcal O(\hbar)
\end{align}
are the renormalised coupling constants. A Maurer--Cartan element \(\phi\) of \(\mathfrak g\) makes \(S\) stationary, and a one-particle-irreducible Maurer--Cartan element \(\psi=\phi+\mathcal O(\hbar)\) makes \(W=S+\mathcal O(\hbar)\) stationary.
\end{example}
\section{Twisting qua classical backgrounds}\label{sec:classical_background}
Given a classical field theory defined by an action principle \(S[\phi]\) whose space of fields is a linear space, one can construct the theory with respect to a classical background \(\phi_0\) by
\begin{equation}S_{\phi_0}[\tilde\phi]=S[\tilde\phi+\phi_0].\end{equation}
For arbitrary \(\phi_0\), the resulting action \(S_{\phi_0}[\tilde\phi]\) will in general contain tadpole terms (i.e.~terms linear in \(\tilde\phi\)), signalling the nonstationarity of the putative `vacuum' \(\tilde\phi=0\) (i.e.~\(\phi=\phi_0\)). However, when \(\phi_0\) is a solution to the classical equations of motion, then \(S_{\phi_0}\) will not have tadpole terms, and the putative `vacuum' \(\tilde\phi=0\) is stationary (although not necessarily stable).
For a quantum field theory, the same applies except that \(\phi\) must instead be stationary with respect to the one-particle irreducible effective action in order for the effective action with background to not have tadpole terms.

The above well known physics lore
has a natural interpretation in terms of \(L_\infty\)-algebras. Given a perturbative classical field theory, we can formulate it as a cyclic \(L_\infty\)-algebra \(\mathfrak g\) such that the minimal model yields the scattering amplitudes \cite{Macrelli:2019afx} or CFT correlators \cite{Chiaffrino:2023wxk,Alfonsi:2024utl}. (For reviews of the \(L_\infty\)-algebraic formalism to perturbative quantum field theory, see \cite{Jurco:2018sby,Jurco:2020yyu,Borsten:2024gox}.)
Now, a Maurer--Cartan element for \(\mathfrak g\) is precisely a field (rather than an antifield, Fadeev--Popov ghost, etc.)
that satisfies the equations of motion. The twist corresponds to turning on a background field to obtain a new \(L_\infty\)-algebra, which corresponds to the theory atop the classical backgrounds. More generally, we can twist by any field that need not fulfil the Maurer--Cartan equation \eqref{eq:Maurer-Cartan}, in which case we obtain a curved \(L_\infty\)-algebra, which corresponds to a field theory action containing tadpole terms.

In the quantum case, a perturbative quantum field theory may be formulated as a cyclic quantum \(L_\infty\)-algebra such that the minimal model yields the loop-level scattering amplitudes \cite{Jurco:2019yfd}.
Then a one-particle-irreducible Maurer--Cartan element corresponds to a field configuration stationary with respect to the one-particle-irreducible effective action,
and twisting by such a field corresponds to turning on this background.

\begin{example}[The Mexican-hat potential]
Let us consider a scalar field theory with a Mexican-hat potential:
\begin{equation}
  \label{eq:mexicanscalar}
    S[\phi] = \int\mathrm d^dx\,\left(\frac12\phi\square\phi+\frac12\mu^2\phi^2-\frac1{4!}\lambda\phi^4\right).
\end{equation}
Perturbation theory about \(\phi=0\) shows that this `vacuum' is in fact unstable since the spectrum contains tachyons as seen by the wrong positive sign for the mass term \(\frac12\mu^2\phi^2\).

The above correspondence between perturbative field theories and (cyclic) homotopy Lie algebras allows us to recast this in terms of $L_\infty$-algebras. The \(L_\infty\)-algebra $\fg^{\text{MH}}$ corresponding to \eqref{eq:mexicanscalar} is
\begin{equation}
    \mathfrak g = (0\to\mathcal C^\infty(\mathbb R^d)[-1]\xrightarrow{-\square-\mu^2}\mathcal C^\infty(\mathbb R^d)[-2]\to0),
\end{equation}
with
\begin{subequations}
\begin{align}
    \mu_1(\phi[-1])&=-(\square+\mu^2)\phi[-2]\\
    \mu_3(\phi_1[-1],\phi_2[-1],\phi_3[-1])&=(\lambda\phi_1\phi_2\phi_3)[-2]
\end{align}
\end{subequations}
for fields \(\phi[-1],\phi_1[-1],\phi_2[-1],\phi_3[-1]\in\mathcal C^\infty(\mathbb R^d)[-1]\)
with all other \(\mu_i\) vanishing.

A Maurer--Cartan element of $\fg^{\text{MH}}$ is a degree-one element (that is, a field rather than an anti-field) which is a solution to the equations of motion:
\begin{equation}
    (\square+\mu^2)\phi-\frac1{3!}\lambda\phi^3=0.
\end{equation}
In particular, a constant value
\begin{equation}
  \label{eq:MHMC}Q=\mu\sqrt{6/\lambda}\end{equation}
is a solution to the equations of motion. The twist of the $L_\infty$-algebra $\fg^{\text{MH}}$ with respect to the Maurer--Cartan element \eqref{eq:MHMC} is the \(L_\infty\)-algebra
\begin{equation}
    \fg_Q^{\text{MH}} = (0\to\mathcal C^\infty(\mathbb R^d)[-1]\xrightarrow{-(\square-2\mu^2)}\mathcal C^\infty(\mathbb R^d)[-2]\to0),
\end{equation}
with
\begin{subequations}
\begin{align}
\mu_1^Q(\phi[-1])&=-(\square-2\mu^2)\phi[-2]\\
\mu_2^Q(\phi_1[-1],\phi_2[-1])&=\mu\sqrt{6\lambda}\phi_1\phi_2[-2]\\
\mu_3^Q(\phi_1[-1],\phi_2[-1],\phi_3[-1])&=\lambda\phi_1\phi_2\phi_3[-2]
\end{align}
\end{subequations}
which corresponds to the action
\begin{equation}
    S_Q[\tilde\phi] = S[Q+\tilde\phi] = \int\mathrm d^dx\,\left(\frac12\tilde\phi(\square-2\mu^2)\tilde\phi-\frac{\mu\sqrt{6\lambda}}{3!}\tilde\phi^3-\frac1{4!}\lambda\tilde\phi^4\right)
\end{equation}
corresponding to perturbation theory about the true vacuum, since now the mass term \(-\mu^2\phi^2\) has the correct negative sign.

We can, furthermore, consider quantum corrections. Consider \eqref{eq:effective-action} but with the wrong-sign mass term:
\begin{align}
W &=  \int\mathrm d^4x\,\frac12\phi(\square+\mu_{\mathrm R}^2)\phi-\frac1{4!}\lambda_{\mathrm R}\phi^4
    -\frac\hbar {64\piup^2}M^2(\phi)\ln M^2(\phi)+\dotsb \notag\\
&= S + \mathcal O(\hbar),
\end{align}
where
\begin{align}
    \mu_{\mathrm R}&=\mu+\mathcal O(\hbar) &
    \lambda_{\mathrm R}&=\lambda+\mathcal O(\hbar) \\
    M^2(\phi)&=-\mu_{\mathrm R}^2+\frac12\lambda_{\mathrm R}\phi^2
\end{align}
are the renormalised quantities. Then a one-particle-irreducible Maurer--Cartan element \(Q\) is a stationary point of the quantum effective action \(W\). One choice is the constant value
\begin{equation}
    Q = \mu_{\mathrm R}\sqrt{6/\lambda_{\mathrm R}}+\mathcal O(\hbar).
\end{equation}
Twisting by this, we obtain the quantum \(L_\infty\)-algebra whose effective action is
\begin{multline}
    \Gamma_Q[\tilde\phi] = \Gamma[Q+\tilde\phi] = \int\mathrm d^dx\,\Big(\frac12\tilde\phi(\square-2\mu_{\mathrm R}^2)\tilde\phi-\frac{\mu_{\mathrm R}\sqrt{6\lambda_{\mathrm R}}}{3!}\tilde\phi^3-\frac1{4!}\lambda_{\mathrm R}\tilde\phi^4\\
   + M^2(Q+\tilde\phi)\ln M^2(Q+\tilde\phi)+\dotsb\Big).
\end{multline}
\end{example}

\section{Classical backgrounds and anomalies}\label{sec:anomaly}
In the previous section, we introduced a classical background for a field that was already present in the \(L_\infty\)-algebra.
In this section, we instead enlarge the \(L_\infty\)-algebra first by coupling it to nondynamical fields and then twist to put the theory on nontrivial backgrounds.
If we apply this procedure to one-particle-irreducible effective actions, we may observe the non-conservation of classically would-be-conserved currents that signal the presence of a quantum anomaly.
(For reviews of anomalies, see e.g.~\cite{Shifman:1988zk,Bertlmann:1996xk,Harvey:2005it}. For a different perspective on anomalies in the Batalin--Vilkovisky formalism, see \cite{Howe:1990pz}.)

\subsection{Background gauge fields and gauge anomalies}
We start with a field theory that has a classical global symmetry \(G\) that may be anomalous, described by a quantum \(L_\infty\)-superalgebra \(\mathfrak g\).
Now, we may couple this theory to a nondynamical \(G\)-gauge field \(A\), but with neither a kinetic term for \(A\) nor ghosts for the gauge transformations, to produce a larger quantum \(L_\infty\)-superalgebra \(\tilde{\mathfrak g}\).
(This procedure is possible even if \(G\) is anomalous, since there are no ghosts or gauge transformations,
but not always unique since there are always curvature ambiguities (improvement terms).)
Since \(A\) has no dynamics, it is not constrained by any equations of motion.

We may then twist the enlarged algebra \(\tilde{\mathfrak g}\) by a background field \(A_0\) to obtain \(\tilde{\mathfrak g}_{A_0}\). This will in general not be a one-particle-irreducible Maurer--Cartan element since the corresponding one-particle-irreducible effective theory \((\tilde{\mathfrak g}_{A_0})_\mathrm{1PI}\) will have a tadpole term (corresponding to nullary operations \(\mu_0=\sum_g\hbar^g\mu_0^g\)), which represents the current \(j\) induced by the background field \(A_0\); one detects an anomaly when \(\partial\cdot j\ne0\).
The same discussion applies to abelian \(p\)-form symmetries except that the corresponding connection \(A\) is then a \((p+1)\)-form potential \cite{Borsten:2024gox}.

For example, in \(d=2n\) spacetime dimensions, consider a Dirac field \(\Psi\) with action
\begin{equation}\label{eq:orig_action_gauge}
    S[\phi] \coloneqq \int\mathrm d^dx\,\bar\Psi\partial_\mu\gamma^\mu\Psi
\end{equation}
corresponding to the \(L_\infty\)-superalgebra
\begin{equation}
    \mathfrak g\coloneqq\left(0\to\underbrace{\Omega^0(M;\Piup E)[-1]}_\Psi\to\underbrace{\Omega^d(M;\Piup E^*)[d-2]}_{\Psi^+}\to0\right),
\end{equation}
where \(\Piup E\) is the Dirac spinor bundle with the \(\mathbb Z_2\) parity reversed.
This has the global \(\operatorname U(1)\times\operatorname U(1)\) vector and axial symmetries
\begin{equation}
    \Psi\mapsto\exp(\mathrm i\alpha+\mathrm i\beta\gamma^{d+1})\Psi
\end{equation}
for any real numbers \(\alpha,\beta\in\mathbb R\).
It is well known that if we gauge one of the two \(\operatorname U(1)\) symmetries, the other becomes anomalous and hence cannot be consistently gauged --- if we try to do so, longitudinal modes of the other gauge field will not decouple, and we lose unitarity.
However, we may `nearly' gauge both
in the sense of introducing non-dynamical gauge fields \(A^\mathrm{vect}\) and \(A^\mathrm{axial}\) (without kinetic terms or gauge transformations) and changing the derivative to the covariant derivative:
\begin{equation}\label{eq:gauged_action}
    S_\text{BV} \coloneqq \int\mathrm d^dx\,\bar\Psi\gamma^\mu(\partial_\mu+\mathrm iA^{\mathrm{vect}}_\mu+\mathrm iA^{\mathrm{axial}}_\mu\gamma^{d+1})\Psi,
\end{equation}
corresponding to the \(L_\infty\)-superalgebra
\begin{multline}
    \tilde{\mathfrak g}\coloneqq\Bigg(0\to\underbrace{\Omega^1(M)}_{A^{\mathrm{vect}}}\oplus\underbrace{\Omega^1(M)}_{A^{\mathrm{axial}}}\oplus\underbrace{\Omega^0(M;\Piup E)[-1]}_\Psi\\\to\underbrace{\Omega^{d-1}(M)[d-3]}_{A^{\mathrm{vect}+}}\oplus\underbrace{\Omega^{d-1}(M)[d-3]}_{A^{\mathrm{axial}+}}\oplus\underbrace{\Omega^d(M;\Piup E)[d-2]}_{\Psi^+}\to0\Bigg).
\end{multline}
Note that we do \emph{not} include ghosts for any would-be gauge transformations or kinetic terms for the gauge fields, which remain nondynamical.

This now admits a classical local \(\operatorname U(1)\times\operatorname U(1)\) symmetry
\begin{align}
    \Psi&\mapsto\exp(\mathrm i\alpha(x)+\mathrm i\beta(x)\gamma^{d+1})\Psi,&
    A^{\mathrm{vect}}_\mu&\mapsto A_\mu-\partial_\mu\alpha,&
    A^{\mathrm{axial}}_\mu&\mapsto A_\mu-\partial_\mu\beta,
\end{align}
which is broken at the quantum level.
If one sets the background field to be \(A^{\mathrm{vect}}_\mu=A^{\mathrm{axial}}_\mu=0\), one recovers the original action \eqref{eq:orig_action_gauge} as a special case of \eqref{eq:gauged_action}.

Now, we may twist \(\tilde{\mathfrak g}\) by background fields \(A_0\coloneqq(A^{\mathrm{vect}}_0,A^{\mathrm{axial}}_0)\) to obtain the curved \(L_\infty\)-superalgebra \(\tilde{\mathfrak g}_{A_0}\).
If we then pass to the one-particle-irreducible effective action represented by the \emph{curved} \(L_\infty\)-superalgebra \((\tilde{\mathfrak g}_{A_0})_{\mathrm{1PI}}\),
this in general has a tadpole term, corresponding to (anomalously) non-conserved currents.
For simplicity, let us suppose that \(A^{\mathrm{axial}}_0=0\), that is, we only turn on a background vector gauge field.
In that case, there are tadpole terms in the one-particle-irreducible effective action generated by one-point one-loop diagrams such as
\begin{equation}
    \feyn{\otimes!g{A^{\mathrm{vect}}}f0flAfluVf0!g{A^{\mathrm{axial}}}}\feynstrut01\strut\quad\text{(if \(d=2\))}\qquad\text{or}\qquad
    \Diagram{\negmedspace\negmedspace\negmedspace\smash\otimes\\!{gd}{A^{\mathrm{vect}}}\\\scalebox{1.414}{\(\feyn{fs0}\)}\medspace\rotatebox{90}{\(\feyn a\)}\negmedspace\scalebox{1}[0.7071]{\(\feyn{fv}\)}fdA!g{A^{\mathrm{axial}}}\\\scalebox{1.414}{\(\feyn{fs0}\)}\scalebox{1}[0.7071]{\(\feyn{fv}\)}fuV\\!{gu}{A^{\mathrm{vect}}}\\\negmedspace\negmedspace\negmedspace\smash\otimes}
    \quad\text{(if \(d=4\))},
\end{equation}
where \(\otimes\) refers to contraction with the background \(A^{\mathrm{vect}}_0\).
These tadpoles \(\mu_0^{(\tilde{\mathfrak g}_{A_0})_{\mathrm{1PI}}}\) correspond to the axial current, i.e.
\begin{equation}
    \langle A^{\mathrm{axial}}_\mu,\mu_0^{(\tilde{\mathfrak g}_{A_0})_{\mathrm{1PI}}}\rangle=j^\mathrm{axial}_\mu,
\end{equation}
induced by the background gauge field \(A_0^{\mathrm{vect}}\).

In this example, the Adler--Bell--Jackiw chiral anomaly may then be detected by the non-vanishing of the expression
\begin{equation}\label{eq:ABJ_anomaly}
    \partial_\mu\langle j^\mu_\mathrm{axial}\rangle=
    \partial_\mu \langle A^{\mathrm{axial},\mu},\mu_0^{(\tilde{\mathfrak g}_{A_0})_{\mathrm{1PI}}}\rangle
    \propto \star(\overbrace{F^{\mathrm{vect}}_0\wedge\dotso\wedge F^{\mathrm{vect}}_0}^{d/2}),
\end{equation}
where \(F^{\mathrm{vect}}_0=\mathrm dA^{\mathrm{vect}}_0\) is the background field strength (and, as usual, we insist that the vector current is anomaly free).

\subsection{Curved background spacetime and mixed anomalies}
Similar considerations apply for mixed anomalies except that we must introduce a background metric \(g\) on the spacetime manifold \(M\) in addition to a background gauge field \(A\).
Given a theory with a global symmetry given by a Lie group \(H\) (with Lie algebra \(\mathfrak h\)) that is described by a quantum \(L_\infty\)-superalgebra \(\mathfrak g\),
we may then couple it to a non-dynamical metric \(g\) and a background gauge field \(A\) (and their antifields \(g^+\) and \(A^+\)) to produce the larger quantum \(L_\infty\)-superalgebra
\begin{multline}
    \tilde{\mathfrak g}=\mathfrak g\oplus \Bigg(
    0\to\overbrace{\Omega^0(M;\mathrm T^{*\odot2}M)[-1]}^g\oplus\overbrace{\Omega^1(M;\mathfrak h)}^A\\
    \to
    \underbrace{\Omega^d(M;\mathrm T^{\odot2}M)[d-2]}_{g^+}\oplus\underbrace{\Omega^{d-1}(M;\mathfrak h)[d-3]}_{A^+}
    \to0
    \Bigg),
\end{multline}
where the antifields are tensor densities, or equivalently (after picking an orientation of spacetime) top-degree differential forms valued in powers of the (co)tangent bundle.

As before, we may twist \(\tilde{\mathfrak g}\) by background fields \((g_0,A_0)\) to obtain the curved \(L_\infty\)-superalgebra \(\tilde{\mathfrak g}_{(g_0,A_0)}\).
If we then pass to the one-particle-irreducible effective action represented by the curved \(L_\infty\)-superalgebra \((\tilde{\mathfrak g}_{(g_0,A_0)})_{\mathrm{1PI}}\),
mixed gauge--gravitational anomalies may be detected by the non-vanishing of the expression
\begin{equation}
    \nabla_\mu\langle j^\mu_a\rangle=\partial_\mu\frac{\delta\mu_0^{(\tilde{\mathfrak g}_{(g_0,A_0)})_{\mathrm{1PI}}}}{\delta A_{0\mu}^a},
\end{equation}
where \(a\) is an adjoint index for \(\mathfrak h\).

\section{Twisting qua twisting}\label{sec:twistingtwisting}
Since the seminal work of Witten \cite{Witten:1988ze}, it is well known that many supersymmetric field theories admit twists by nilquadratic supersymmetry generators, which may be topological, holomorphic, or somewhere in between \cite{Elliott:2020ecf}. Such twisting is described by the twisting (in the sense of \cref{def:classicaltwist}) of the super-Poincaré Lie superalgebra \cite{Saberi:2021weg} encoding the supersymmetries of the theory. This section reviews
how this notion dovetails with the \(L_\infty\)-algebra formalism for scattering amplitudes.
\subsection{General construction and off/on-shell supersymmetry}
For many supersymmetric models, one finds that the supersymmetry algebra is only realised on shell. That is, the algebra closes only modulo the equations of motion. Oftentimes the failure of the supersymmetry algebra to close can be explained through the language of homotopy algebra; the would-be Lie algebra morphism $\rho$ from the supersymmetry algebra to the endomorphisms of the model is in fact an $L_\infty$-algebra morphism and there are higher maps, correcting the failure of $\rho$ to close off shell \cite{Eager:2021wpi}.

To set the stage, we are interested in theories which have a (not necessarily strict) symmetry given by a super Lie algebra $\fp=\fp^+\oplus \fp^-$. (That is, a strict flat $L_\infty$-superalgebra concentrated in cohomological degree $0$.) Usually these supersymmetry algebras are of the form:
\begin{equation}
  \label{eq:supersymmetry}
  \fp=\overline{\fder(\ft)} \ltimes \ft,
\end{equation}
where $\ft=\ft^+\oplus \ft^-$ is a Lie superalgebra whose only nontrivial bracket is defined by a map $\ft^-\otimes \ft^-\to \ft^+$, and $\overline{\fder(\ft)}$ is a subalgebra of the (bidegree-preserving) Lie algebra of derivations of $\ft$.
\begin{remark}
  Physically, $\ft^+$ is to be thought of as the translations of the underlying space of the theory while $\ft^-$ are the supersymmetries and $\fder(\ft)$ are the Lorentz  and $R$-symmetries of the theory. Lie superalgebras of this form encode supersymmetry algebras in any dimension with any amount of supersymmetry.
\end{remark}

\begin{remark}\label{rem:lift}
  Algebras of this form furthermore allow for a lift of the $\ZZ\times\ZZ_2$-grading to a $\ZZ\times \ZZ$-grading by putting $\overline{\fder(\ft)}$, $\ft^-$, and $\ft^+$ in bidegrees $(0,0)$ $(0,1)$, and $(0,2)$, respectively. This lift allows for a more tractable treatment of the algebraic structures and homotopy transfer.
\end{remark}

\bigskip
Let \(\mathfrak g\) be a cyclic \(L_\infty\)-superalgebra corresponding to a perturbative field theory enjoying a (not necessarily strict) global supersymmetry given by a Lie superalgebra \(\mathfrak p\) of the form \eqref{eq:supersymmetry}\footnote{The construction goes through for any Lie superalgebra $\fp$, but we will focus on this case.},  regarded as an \(L_\infty\)-superalgebra concentrated in the $(\ZZ\times\ZZ_2)$-bidegrees \((0,0)\) and \((0,1)\).

Since the global symmetry acts on the fields, there must be an action of \(\mathfrak p\) on \(\mathfrak g\). In the language of homotopy algebras there then exists a morphism of \(L_\infty\)-superalgebras
\begin{equation}
  \label{eq:algebraaction}
    \mathfrak p\rightsquigarrow\big(\Coder(\CEc(\mathfrak g))\big)^1
\end{equation}
where \(\CEc(\mathfrak g)\) is the Chevalley--Eilenberg cosuperalgebra of \(\mathfrak g\), and
\(\Coder(X)\) is the differential graded Lie superalgebra of $(\ZZ\times\ZZ_2)$-graded coderivations on a differential graded cosuperalgebra \(X\) (whose elements need not be chain maps), and whose differential is given by the commutator with the differential on \(X\), and the notation \((-)^1\) denotes the degree-one part \cite{Kajiura:2004xu,Kajiura:2005sn,Kajiura:2006mt}.

This is equivalent to a family of morphisms of degree $(2-p-q,0)$ for $p\geq1$ and $q\geq0$
\begin{equation}
  \label{eq:componentalgebraaction}
\rho^{(p,q)}\colon\fp^{\wedge p}\otimes \fg^{\wedge q}\to \fg,  
\end{equation}
with $\rho^{(0,k)}=\mu_{k}^\fg$,
satisfying some compatibility relations with the $L_\infty$-algebra structure on $\fp$ and $\fg$ \cite{Kajiura:2004xu,Kajiura:2005sn,Kajiura:2006mt} of the form:
\begin{align}
&0=\sum_{p+r=n}\sum_{\substack{\sigma\in\Sym(n)}}\frac{\pm1}{p!r!}
\rho^{(1+r,m)}(\mu^\fp_p(x_{\sigma(1)},\cdots,x_{\sigma(p)}),
x_{\sigma(p+1)}\cdots,x_{\sigma(n)};\phi_1,\cdots,\phi_m) \notag\\
&\quad +
  \sum_{\substack{\sigma,\tilde\sigma\in\Sym(n)\\p+r=n\\i+s+j=m}}
\frac{\pm1}{p!r!i!s!j!}
\rho^{(p,i+1+j)}\bigg(x_{\sigma(1)},\dotsc,x_{\sigma(p)};\phi_{\tilde{\sigma}(1)},\dotsc,\phi_{\tilde{\sigma}(i)},\label{AovL}\\[-1.8em]
&\qquad\qquad\qquad\quad\!\!\rho^{(r,s)}(x_{\sigma(p+1)},\dotsc,x_{\sigma(n)};\phi_{\tilde{\sigma}(i+1)},\dotsc,\phi_{\tilde{\sigma}(i+s)}),
\phi_{\tilde{\sigma}(i+s+1)},\dotsc,\phi_{\tilde{\sigma}(m)}\bigg).\notag 
\end{align}

\begin{remark}
  Algebraic structures of this form are essentially a symmetrised version of an $A_\infty$-algebra over an $L_\infty$-algebra (or open--closed homotopy algebra), introduced by Kajiura and Stasheff in \cite{Kajiura:2004xu,Kajiura:2005sn,Kajiura:2006mt}, where an $L_\infty$-algebra acts on an $A_\infty$-algebra through homotopy derivations
\cite{PMIHES_1977__47__269_0,TolleyThesis,LadaTolley,2014arXiv1409.1691D}. See also \cite{Jonsson:2024rie} for open--closed homotopy algebras in supersymmetry.  
\end{remark}

Given the two $L_\infty$-superalgebras $\fp$ and $\fg$ and the maps $\rho^{(p,q)}$, we may form an \(L_\infty\)-superalgebra whose Chevalley--Eilenberg coalgebra is of the following form.
There is an isomorphism of differential graded cosuperalgebras
\begin{equation}
  (\CEc(\fp),d_{\CE}^\fg)\otimes(\CEc(\fg),d_{\CE}^\fg)\cong (\CEc(\fp\oplus\fg),d_{\CE}^{\fp}+d_{\CE}^{\fg}).
\end{equation}
We extend the $L_\infty$-map \eqref{eq:algebraaction} to a degree $(1,0)$-coderivation $d_{\CE}^\rho$on $\CEc(\fp\oplus\fg)$. Then, $d_{\CE}^\rho$ can be added to form the full algebra

\begin{equation}
  \label{eq:CEsemidirect}
 (\CEc(\fp\oplus\fg),d^{\fp\ltimes\fg}_{\CE}\coloneq d_{\CE}^{\fp}+d_{\CE}^{\fg}+d_{\CE}^\rho)
\end{equation}
The nilquadraticity of $d^{\fp\ltimes\fg}_{\CE}$ is then equivalent to the homotopy Jacobi identities on $\fp$ and $\fg$, and the compatibility \eqref{AovL}. The proof of this is straightforward and similar to the ones in \cite{Kajiura:2004xu,Kajiura:2005sn,Kajiura:2006mt}. The resulting $L_\infty$-superalgebra is then 
 
\begin{equation}
  \label{eq:semidirect}
    \mathfrak p\ltimes\mathfrak g \coloneqq\Big(\fp\oplus\fg,(\{\mu_i^\fp,\rho^{(p,q)},\mu_j^\fg\})_{\substack{i,j\ge0\\p,q\ge1}}\Big).
  \end{equation}
    (This may be formally thought of as `gauging' the global super-Poincaré symmetry \(\mathfrak p\) since \(\mathfrak p\) sits in degree zero, which normally corresponds to Fadeev--Popov ghosts   \cite{Jurco:2018sby}.)

We would like to twist by a nilquadratic element of \(\mathfrak p\subset\mathfrak p\ltimes\mathfrak g\). Naïvely, this cannot correspond to a Maurer--Cartan element of \(\mathfrak p\ltimes\mathfrak g\) since \(\mathfrak p\) is concentrated in degree zero rather than one. For this purpose, following \cite{Saberi:2021weg}, we introduce a formal parameter \(u\) of degree \((1,-1)\) and take
\begin{equation}
    \mathfrak G[u]\coloneqq(\mathfrak p\ltimes\mathfrak g)\otimes {\mathbb C}[u].
\end{equation}
(Complexification is convenient and often necessary for twisting \cite{Elliott:2020ecf}.)
Then \(\mathfrak p[u]\subset \fG[u] \) contains a ready supply of bidegree \((1,0)\) elements of the form \(uQ\) where \(Q\in\mathfrak p^-\) satisfies \([Q,Q]_{\mathfrak p}=0\), and these form Maurer--Cartan elements inside \(\mathfrak G[u]\) with which we may twist.

More generally, we may twist by a Maurer--Cartan element of \(\mathfrak G[u]\) that is not purely restricted to \(\mathfrak p[u]\); this corresponds to a combination of taking a classical background and twisting, a possibility that will be fully exploited in \cref{sec:localisation}.

For supersymmetric models where supersymmetry is realised only on shell---that is, after imposing the equations of motion---the projective superspace (reviewed in \cite{Lindstrom:2007rq,Kuzenko:2010bd}), harmonic superspace (reviewed in \cite{Howe:1995md,Galperin:2001seg}) or the pure spinor (reviewed in \cite{Berkovits:2017ldz,Cederwall:2013vba,Cederwall:2022fwu,Eager:2021wpi}) formalisms often succeed in producing an off-shell representation at the cost of introducing infinite towers of auxiliary fields.

However, since such infinite towers are often unwieldy, it is sometimes convenient to be able to work directly with the on-shell supersymmetry representation.
For this, one can often lift the  \(\mathbb Z\times\mathbb Z_2\) bigrading to \(\mathbb Z\times\mathbb Z\) \cite{Eager:2021wpi,Jonsson:2024uyr} and apply a minimal-model construction to obtain an on-shell supersymmetry representation with finitely many components.

Supersymmetry algebras arising in physics usually admit such lifts (cf. \cref{rem:lift}).
Schematically, let $V$ be an $n$-dimensional (real) vector space, $\mathfrak{o}(V)$ the Lie algebra of orthogonal transformations of $V$, and $S$ a (linear combination of) spin representations of $\mathfrak o(V)$. Then, a (lifted) supersymmetry algebra (without $R$-symmetry) is of the form
\begin{equation}
    \mathfrak p = \mathfrak o(V)\ltimes(S[0,-1]\oplus V[0,-2] ).
\end{equation}
We then introduce a formal parameter \(u\) of bidegree \((1,-1)\):
\begin{equation}
    \mathfrak p[u] = \mathfrak o(V)\ltimes(S[0,-1]\oplus V[0,-2]) \otimes \mathbb C[u].
\end{equation}
Then \(\mathfrak p[u]\) contains a ready supply of degree \(1\) elements of the form \(uQ\) where \(Q\in S[0,-1]\) is such that $[Q,Q]=0$.

\subsection{Example with off-shell supersymmetry}
Here we provide a simple example with off-shell supersymmetry, that is, where the supersymmetry algebra is represented on the nose.
Let us consider a free theory containing a $\mathcal N=1$ chiral supermultiplet (sometimes called the Wess--Zumino multiplet) on Minkowski space \(\mathbb R^{1,3}\) (or, rather, complexified Minkowski space \(\mathbb C^4\)). The supersymmetry algebra is
\begin{equation}
    \mathfrak p = \mathfrak{o}(\mathbb C^4)\ltimes(\Piup S\oplus \mathbb C^4),
\end{equation}
where $S=S_+\oplus S_-$ is the Dirac spin representation, \(S_\pm\cong\mathbb C^2\) are the Weyl spinor representations, and $\Piup$ denotes shift in the $\ZZ_2$ degree. In addition to the brackets defined by the action of $ \mathfrak{o}(\mathbb C^4)$ on $\mathbb C^4$ and $S$, the only nontrivial brackets are defined by the isomorphism
\begin{equation}
  S_+\otimes S_-\cong \mathbb C^4.
\end{equation}

The \(L_\infty\)-superalgebra defining the supermultiplet is 
\begin{multline}
    \mathfrak g = (0\to\Omega^0(\mathbb C^4;\mathbb C\oplus\mathbb C\oplus\Piup S_+\oplus\overline{\mathbb C\oplus\mathbb C\oplus\Piup S_+})[-1]\xrightarrow{\mu_1}\\
    \Omega^4(\mathbb C^4;\mathbb C\oplus\mathbb C\oplus\Piup S_-\oplus\overline{\mathbb C\oplus\mathbb C\oplus\Piup S_-})[2]\to0),
\end{multline}
corresponding to the supermultiplet \((\phi,\psi,F)\)
with a complex scalar field \(\phi\), a Weyl fermion \(\psi\), and an auxiliary field \(F\),
and with the complex conjugate fields explicitly separated out due to the complexification (see e.g.~\cite[§10.1]{Elliott:2020ecf}).
The only nontrivial bracket is $\mu_1$, corresponding to the free equations of motion.

Gauging supersymmetry in the sense of adjoining global supersymmetry ghosts, we obtain the $L_\infty$-superalgebra 
\begin{equation}
    (\fp\ltimes \fg)[u] =(\mathfrak p\xrightarrow0\Omega^0(\mathbb C^4;\mathbb C\oplus\mathbb C\oplus S_+)[-1]\to\Omega^4(\mathbb C^4;\mathbb C\oplus\mathbb C\oplus S_-)[2]\to0)[u],
\end{equation}
where we have also adjoined a formal variable \(u\) of degree \((1,-1)\),
with differential $\mu_1$, and $\mu_2$ corresponding to the Lie algebra structure on $\fp$ and its action on $\fg$. 
  
A supercharge $Q\in \fp$ squares to zero precisely when $Q\in S_+$ or $Q\in S_-$. Thus, by picking such a $Q$ we can twist $\fp\ltimes \fg$, to obtain the twisted theory \cite{Saberi:2021weg}. Moreover, since $\mu_i\rvert_{ \fg\otimes\dotsb\otimes \fg}=0$ for \(i\ne1\),\footnote{This is because \(\mathfrak g\), describing a free theory, did not have a \(\mu_2\). Adjoining \(\mathfrak p\) produces new brackets between elements of \(\mathfrak p\) and elements of \(\mathfrak g\) but never amongst elements of \(\mathfrak g\).}
any element $(\phi,\bar\phi)\in\Omega^0(\mathbb C^4;\mathbb C\oplus\bar{\mathbb C})[-1]$ that satisfies the Klein--Gordon equation defines a Maurer--Cartan element.
Furthermore, if we pick \(Q\in S_+\), the holomorphic field \(\phi\) is \(Q\)-closed, so we may twist $(\fp\ltimes \fg)[u]$ by a linear combination
\begin{equation}
    \mathcal Q_\phi = \phi+uQ
\end{equation}
to obtain a twisted $L_\infty$-superalgebra with differential
\begin{equation}
    \mu_1^{(\fp\ltimes\fg)_{Q_\phi}}(-)=\mu_1(-)+\mu_2(\phi,-)+\mu_2(uQ,-).
\end{equation}

\subsection{Example with on-shell supersymmetry}
As a less trivial example, we consider the holomorphic twist of ten-dimensional \(\mathcal N=(1,0)\) supersymmetric Yang--Mills theory \cite{Baulieu:2010ch,Elliott:2020ecf} into holomorphic Chern--Simons theory in five complex-dimensions, which can be naturally formulated in the pure-spinor superfield formalism \cite{Saberi:2021weg}. The ten-dimensional \(\mathcal N=(1,0)\) super-Poincaré Lie superalgebra is (after lifting the $\ZZ_2$-grading to a $\ZZ$-grading) 
\begin{equation}
  \mathfrak p=\left(\mathfrak o\ltimes(V[0,-2]\oplus S[0,-1])\right)\otimes\mathbb C[u],
\end{equation}
where $V\cong\mathbb C^{10}$ and $S$ is one of the Weyl spin representations of $\mathfrak o(V)$.

Gauging supersymmetry in the sense of adjoining global ghosts to the \(L_\infty\)-superalgebra (and adjoining the formal variable $u$), we obtain the \(L_\infty\)-superalgebra
\begin{equation}
  \label{eq:10d}
    \left(\mathfrak p\ltimes(\mathfrak g\otimes X)\right)\otimes \mathbb C[u],
\end{equation}
where
\begin{equation}
    X = \mathbb C[x^\mu,\theta^\alpha,\lambda^\alpha]/(\lambda^\alpha\gamma^\mu_{\alpha\beta}\lambda^\beta)
\end{equation}
is a differential graded-commutative algebra encoding the pure spinor formulation of the ten-dimensional (colour-stripped) \(\mathcal N=(1,0)\) vector supermultiplet with
\begin{equation}
    \mathrm dx \coloneqq \lambda^\alpha\left(\frac\partial{\partial\theta^\alpha}-\gamma^\mu_{\alpha\beta}\theta^\beta\frac\partial{\partial x^\mu}\right)x,
\end{equation}
(and other differentials vanishing), and \(\mathfrak g\) is the colour Lie algebra (concentrated in bidegree \((0,0)\)).
The coordinates carry the bidegrees the bidegrees \cite[(3.15)~ff.]{Eager:2021wpi}\footnote{Strictly speaking, this only works if we regard spacetime as a supervariety (equipped with the structure sheaf of polynomial superfunctions) rather than a supermanifold (equipped with the structure sheaf of smooth superfunctions). This difficulty can be avoided by introducing a formal variable \(u\) that compensates for the nontrivial degrees \cite{Eager:2021wpi}.}
\begin{align}
|x|&=(0,-2)&|\theta|&=(0,-1)&|\lambda|&=(1,-1)&|u|&=(1,-1).
\end{align}
Now, a Maurer--Cartan element here can be either purely in \(\mathfrak p[u]\), or purely in \(\mathfrak g\otimes M[u]\), or in some combination. An example of the former is
\begin{equation}uQ\end{equation}
where \(Q\) is a pure spinor, i.e.~\(Q^\alpha\gamma_{\alpha\beta}^\mu Q^\beta=0\). An example of the latter is any bosonic solution to the classical equations of motion, such as
\begin{equation}
    a_\mu(\lambda\gamma^\mu\theta),
\end{equation}
where \(a_\mu\in\mathfrak g\times\mathbb C^{10}\) is a fixed Lie-algebra-valued 10-vector. This corresponds to a constant vacuum expectation value
of the gluon field \(A_\mu\) (which breaks part of Lorentz symmetry).
More generally, however, one can have a combination of both, as long as they are compatible.
In the above example, suppose that \(a_\mu(\lambda\gamma^\mu\theta)\)
is annihilated by \(Q\)\footnote{There are indeed nontrivial such elements, see \cite[§4.1.1]{Elliott:2020ecf} for details.}. In that case, we can twist by
\begin{equation}
  uQ+a_\mu(\lambda\gamma^\mu\theta),
\end{equation}
which corresponds to twisting by \(Q\) plus turning on a background field \(A\) that survives twisting. This produces a strict $L_\infty$-superalgebra (that is, $\mu_{i>2}=0$), whose underlying graded supervector space and $\mu_2$ coincides with that of \eqref{eq:10d}, and whose differential is of the form
\begin{equation}
  \mu_1^Q= \lambda\left(\frac\partial{\partial\theta}-\gamma^\mu\theta\frac\partial{\partial x^\mu}\right) + \mu_2\left(uQ,-\right)+\mu_2\left(a_\mu(\lambda\gamma^\mu\theta),-\right).
\end{equation}

In other words, twisting and classical backgrounds correspond to two extreme regions of a single moduli space of possible twisting backgrounds; equivalently, twisting corresponds to giving an expectation value of a certain global `ghost' (similar to twisted supergravity \cite{Costello:2016mgj} but `global').

\section{Supersymmetric  backgrounds and localisation}\label{sec:localisation}
In \cref{sec:twistingtwisting}, we twisted by an element corresponding to the symmetry algebra (or ghost); in \cref{sec:anomaly}, we coupled the theory to nondynamical background fields.
This section combines the two to show that the resulting notion of twisting reproduces the setup for localisation computations using supergravity backgrounds \cite{Festuccia:2011ws}, as reviewed in \cite{Dumitrescu:2016ltq}. From the \(L_\infty\)-algebraic formalism, it is clear that in fact we don't actually need the off-shell multiplet --- an on-shell multiplet suffices if we do homotopy transfer.
The observation that on-shell supersymmetry suffices for localisation was already made in \cite{Losev:2023gsq}. For more on the interaction of the Batalin--Vilkovisky formalism with localisation, see \cite{Cattaneo:2025wdw}. In this section we limit ourselves to a schematic discussion; detailed computations will be given in \cite{localisationpaper}.

For simplicity, let us consider the case of four-dimensional \(\mathcal N=1\) supersymmetry as in \cite{Festuccia:2011ws}. Recall that there are multiple possible supermultiplets containing the stress--energy tensor. Two of these are the Ferrara--Zumino multiplet, 
\begin{equation}(j_\mu,S_{\mu\alpha},x,T_{\mu\nu}),\end{equation}
which consists of  a nonconserved current, the supercharge, a complex scalar, and the stress--energy, respectively, and (in the cases where there is a conserved \(\operatorname U(1)\) R-symmetry current) the R-current supermultiplet, which is given by
\begin{equation}
    (j^{\text{(R)}}_\mu,S_{\mu\alpha},T_{\mu\nu},C_{\mu\nu}),
\end{equation}
corresponding to the conserved R-current, the supercharge, the stress--energy, and a conserved two-form current, respectively. These correspond to the old minimal supermultiplet \cite{Ferrara:1978em,Stelle:1978ye,Fradkin:1978jq} \((A_\mu,\psi_{\mu\alpha},M,g_{\mu\nu})\) and the new minimal supermultiplet \cite{Sohnius:1981tp} \((A^{\text{(R)}}_\mu,\psi_{\mu\alpha},g_{\mu\nu},B_{\mu\nu})\), respectively; gauging local diffeomorphisms and supersymmetries corresponds to introducing the fields of supergravity (for reviews, see \cite{Castellani:1991eu,Freedman:2012zz,Tanii:2014gaa,Ortin:2015hya}) but without introducing kinetic terms for the graviton and gravitino, or equivalently a suitable \(M_\text{Planck}\to\infty\) limit \cite{Festuccia:2011ws}.

For definiteness, let us consider localisation with an old-minimal supergravity background.
Consider chiral superfields taking values in a Kähler manifold \((N,\omega)\) with Kähler potential \(K\in\Omega^{0,0}(N)\) and superpotential \(W\in\Omega^{0,0}(N)\). After fake-gauging super-diffeomorphisms,
the \(L_\infty\)-algebra is concentrated in degrees \(0,1,2,3\) as
\begin{multline}
    \Omega^0(M,\mathrm TM\oplus S)\to\Omega^0(M,E)[-1]\\
    \to\Omega^d(M,E^*)[d-2]\to\Omega^d(M,\mathrm TM\oplus S)[d-3],
\end{multline}
where
\begin{equation}
    E = N\oplus\dotsb
    \oplus
    \mathrm T^{*\odot2}M
    \oplus
    \mathrm T^*M
    \oplus
    \mathbb C\oplus\mathbb C
\end{equation}
is the fibre bundle over \(M\) whose section is \((\phi,\psi,F,g_{\mu\nu},b_\mu,m,\bar m)\), and where the ghosts and ghost antifields correspond to diffeomorphisms and super-diffeomorphisms. Putting the theory on a curved manifold corresponds to twisting by a Maurer--Cartan element \((g,b,m,\bar m)\).

Now, for localisation, we adjoin a new formal coordinate \(u\) of degree \(+1\), so that we get
\begin{multline}
    \Omega^0(M,\mathrm TM\oplus S)[u]\to\Omega^0(M,E)[-1][u]\\\to\Omega^d(M,E^*)[d-2][u]\to\Omega^d(M,\mathrm TM\oplus S)[d-3][u].
\end{multline}
(The inner \([i]\) denotes suspension while the outer \([u]\) denotes a polynomial ring.)
If the background \((g,b,m,\bar m)\) that we twisted by is annihilated by a nilquadratic supersymmetry generator \(Q\in\Gamma(S)\),
then the sum
\begin{equation}
    uQ+(g,b,m,\bar m)
\end{equation}
is also a Maurer--Cartan element, and then localisation corresponds to twisting by this Maurer--Cartan element.

The \(L_\infty\)-algebraic formulation makes it clear that none of this discussion depends on having an off-shell strict realisation of supersymmetry --- an on-shell realisation that can be completed into a non-strict \(L_\infty\)-algebra representation of supersymmetry suffices for localisation \cite{Eager:2021wpi,Losev:2023gsq}.

\subsection{Old- versus new-minimal supergravity for localisation}
In the localisation literature, it is well known \cite{Festuccia:2011ws,Dumitrescu:2016ltq} that localisations using old-minimal versus new-minimal supergravities are not in general equivalent: for instance, the new-minimal localisation is only applicable to cases with an unbroken R-symmetry.
This is superficially in tension with our claim that in fact on-shell supersymmetry suffices in general for localisation, since old-minimal and new-minimal supergravities are but different off-shell formulations of the one and the same four-dimensional \(\mathcal N=1\) supergravity theory.

However, there are in general inequivalent ways to couple a given matter theory to supergravity, such that the resulting theories are different on shell.
For a simpler situation, consider coupling a matter theory with a stress--energy tensor \(T_{\mu\nu}\) to ordinary gravity:
\begin{equation}
    S = \frac1{2\kappa}\int\sqrt{|\det g|}\left(R-\frac12g^{\mu\nu}T_{\mu\nu}\right)+\dotsb.
\end{equation}
The choice of a stress--energy tensor is in general not unique (see e.g.\ \cite{Dumitrescu:2016ltq}), since one can always add improvement terms
\begin{equation}
    T'_{\mu\nu}=T_{\mu\nu}+(\partial_\mu\partial_\nu-\deltaup_{\mu\nu}\partial^2)O
\end{equation}
for a scalar operator \(O\). Then one can instead take
\begin{equation}
    S' = \frac1{2\kappa}\int\sqrt{|\det g|}\left(R-\frac12g^{\mu\nu}T'_{\mu\nu}\right)+\dotsb,
\end{equation}
such that \(S\) and \(S'\) are physically inequivalent theories with different scattering amplitudes.

A similar situation obtains in supergravity.
Suppose that one has a four-dimensional non-gravitational \(\mathcal N=1\) supersymmetric theory \(\mathcal T\) with a conserved R-symmetry, and suppose that we couple it to \(\mathcal N=1\) old-minimal and new-minimal supergravities (assuming both couplings are possible).
In that case, \(\mathcal T\) has two different supermultiplets --- the Ferrara--Zumino supermultiplet and the \(\mathcal R\)-supermultiplet (see e.g.\ \cite{Dumitrescu:2016ltq}) --- that each contain a stress--energy tensor, but the two stress--energy tensors \(T_{\mathrm{FZ}}\), \(T_{\mathcal R}\) are related by improvement terms and in general differ.
Coupling to old-minimal supergravity entails using the Ferrara--Zumino supermultiplet while coupling to new-minimal supergravity entails using the \(\mathcal R\)-supermultiplet.
Therefore, even when auxiliary fields have been integrated out (thus obtaining an action with only on-shell local supersymmetry), the theories \(\mathcal T+\text{old-minimal}\) and \(\mathcal T+\text{new-minimal}\) in general are physically inequivalent.
The inequivalence persists when one takes a rigid \(M_\mathrm{Pl}\to\infty\) limit.

The situation is even clearer for those theories for which one of the two supermultiplets (Ferrara--Zumino and \(\mathcal R\)) does not exist: in this case, one is forced to couple to either old-minimal or new-minimal supergravity depending on which supermultiplet exists, and this can be seen in principle even after auxiliary fields have been integrated out.

\section*{Acknowledgements}
H.K. was partly supported by the Leverhulme Research Project Grant \textsc{rpg}-2021-092. H.K. thanks Pietro Capuozzo\textsuperscript{\orcidlink{0000-0002-6486-9923}} and Charles Alastair Stephen Young\textsuperscript{\orcidlink{0000-0002-7490-1122}} for helpful conversations.

\newcommand\cyrillic[1]{\fontfamily{Domitian-TOsF}\selectfont \foreignlanguage{russian}{#1}}
\bibliographystyle{unsrturl}
\bibliography{biblio}
\end{document}